\documentclass[11pt,a4paper]{article}
 
\usepackage[margin=1.4cm]{geometry}
\usepackage{amsmath,amssymb,amsthm}
\usepackage{graphicx}
\usepackage{booktabs}
\usepackage{tikz}
\usetikzlibrary{arrows.meta,positioning,calc}
\usepackage[colorlinks=true,linkcolor=blue!55!black,citecolor=blue!55!black,
            urlcolor=blue!55!black]{hyperref}
\usepackage{caption}
\usepackage{authblk}
\usepackage{enumitem}

\newtheorem{theorem}{Theorem}[section]
\newtheorem{proposition}[theorem]{Proposition}
\newtheorem{lemma}[theorem]{Lemma}
\newtheorem{corollary}[theorem]{Corollary}
\theoremstyle{definition}
\newtheorem{definition}[theorem]{Definition}
\newtheorem{remark}[theorem]{Remark}
 
\newcommand{\C}{\mathbb{C}}

\newcommand{\E}{\mathbb{E}}
\newcommand{\Var}{\operatorname{Var}}
\newcommand{\Rey}{\operatorname{Re}}
\DeclareMathOperator{\sinc}{sinc}
\newcommand{\Uset}{\mathcal{U}}
\newcommand{\Vset}{\mathcal{V}}

\title{\bfseries A Gauge-Invariant Clustering Coefficient for Complex-Weighted Bipartite Networks}
 
\author[1]{Anjali Waghmare}
\author[2]{Yu Tian}
\author[1]{Renaud Lambiotte}
\affil[1]{\textit{Mathematical Institute, University of Oxford, OX2 6GG Oxford, UK}}
\affil[2]{\textit{Center for Systems Biology Dresden, 01307 Dresden, Germany}}
\date{}

\begin{document}
\maketitle

\begin{abstract}
\noindent
Structural measures such as the clustering coefficient and the average shortest-path length characterise how a network is organised. These measures are typically formulated for real, non-negative edge weights. A class of quantum and photonic architectures has complex edge weights instead, whose phases determine whether alternative routes interfere constructively or destructively. These architectures are also bipartite, so triangles are absent and the smallest closed cycle is a four-node square. Existing measures address complex weights and bipartite structure separately: bipartite clustering coefficients quantify clustering through four-node squares but do not contain phase information, while interferometric coefficients retain phase but are defined on triangles. In this work, we define a clustering coefficient for complex-weighted bipartite networks which reduces to the classical bipartite coefficient when the phases vanish, becomes negative when alternative routes cancel, and can be obtained for all nodes from a single sparse matrix product. We show that the phase accumulated around a square is the smallest gauge-invariant carrier of structural phase information in a bipartite network. The clustering coefficient factorises into a topological contribution and a phase contribution, making a bipartite Watts--Strogatz ensemble analytically tractable. For phases uniformly distributed on $[-\Delta,\Delta]$, the mean phase contribution is $(\sin\Delta/\Delta)^4$, independent of node, degree, and topology. We also derive closed-form expressions for the clustering coefficient, open-path visibility, and phase variance. We numerically verify these predictions and further show that phase disorder increases coherent distance.
\end{abstract}

\tableofcontents

\section{Introduction}
The structure of a network affects the way information moves through it, and a small set of measures is often used to describe this structure. Among the most prominently used measures are the clustering coefficient, which measures the extent to which the neighbourhood of a node is interconnected, and the average shortest-path length, which is the mean number of links on the shortest path between two nodes. Clustering coefficients have been used to quantify community formation in social networks~\cite{granovetter1973,newman2003social}, redundancy and fault tolerance in engineered infrastructure~\cite{crucitti2004,albert2004}, and functional modules in biological interaction networks~\cite{ravasz2002,barabasi2004}. The average shortest-path length has been used to study how efficiently information, influence or disease can spread across a network, as in studies of social distance~\cite{travers1969,watts1998}, transport efficiency~\cite{latora2001} and epidemic spreading~\cite{pastor2001}. The joint behaviour of such measures helps distinguish a regular lattice from a random graph and identifies the small-world regime lying between them~\cite{watts1998}. These measures are typically formulated for edge weights that are real and non-negative, and the case of complex-valued edge weights remains relatively less explored.
 
However, a class of engineered quantum and photonic systems falls outside this setting. In a quantum repeater, a superconducting processor or a programmable photonic mesh, an edge carries a phase in addition to a strength. As a result, signals propagate as complex amplitudes, and two routes connecting the same pair of components may cancel one another. Weights of this form also describe quantum amplitudes~\cite{nielsen2010}, wave propagation~\cite{bracewell2000}, and oscillators coupled with a phase lag~\cite{arenas2008}. The performance of such a device frequently depends on interference between alternative routes. In quantum repeaters, this interference is determined by the relative phase of the optical amplitudes arriving at two quantum memories~\cite{briegel1998,sangouard2011}; in superconducting processors, by the phase accumulated around loops formed by designed links and residual crosstalk~\cite{blais2021,kurpiers2018}; and in programmable photonic meshes, by the phases of the interferometers that compose the network~\cite{reck1994,clements2016,bogaerts2020}. These phase-dependent interference effects have also motivated the development of network measures for complex-valued weights. These include extensions of standard network measures to complex weights~\cite{bottcher2024}, classifications of complex-weighted networks based on structural balance together with corresponding random-walk analyses~\cite{tian2024}, and measures of interference between competing paths that demonstrate its effects on small-world behaviour~\cite{krawciw2024}. 

In addition to their complex-valued edge weights, a second defining feature of these systems is their bipartite structure~\cite{latapy2008}. The nodes of a bipartite network are partitioned into two classes, with links permitted only between nodes belonging to different classes. In the systems considered here, these classes may correspond to memories and photonic channels, qubits and resonators, or optical modes and couplers. As a result of the bipartite structure, every closed path must alternate between the two classes, thus bipartite networks contain no odd-length cycles. Their shortest loops are therefore four-node squares instead of triangles, in which two nodes of the same class are connected through two distinct intermediaries of the other class (Figure~\ref{fig:systems}). Because the standard clustering coefficient quantifies local closure through triangles, it vanishes identically on bipartite networks. The natural analogue is therefore to quantify local closure through four-cycles: two nodes of the same class connected through two distinct intermediaries form a square, providing a measure of the redundancy of routes through the opposite class. Several square-based clustering coefficients have been introduced for bipartite networks~\cite{robins2004,lind2005,zhang2008,opsahl2013}.

\begin{figure}[t]
\centering
\includegraphics[width=0.8\textwidth]{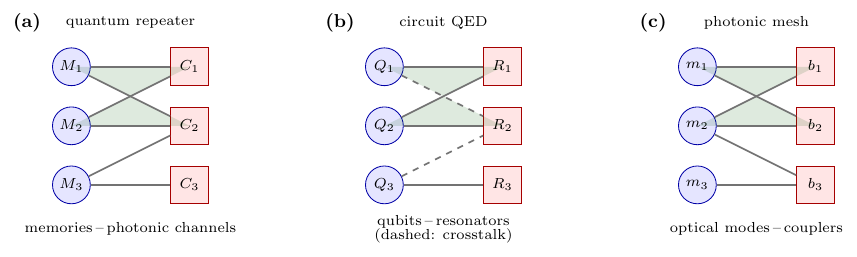}
\caption{Architectures that are bipartite as well as complex weighted. In each panel, circles denote nodes of one class and squares nodes of the other, and every link carries a phase in addition to a strength. The shaded quadrilateral marks a four-node square, the shortest closed path available, in which two routes join the same pair of same-class nodes through distinct intermediaries.
\textbf{(a)} In a quantum repeater, memories are coupled to photonic channels, and the success of entanglement swapping between two memories depends on the relative phase of the optical amplitudes arriving along the two routes~\cite{briegel1998,sangouard2011}. \textbf{(b)} In circuit quantum electrodynamics (QED), transmon qubits are coupled to microwave resonators. The designed links form a bipartite network, and residual crosstalk (dashed) closes it into squares whose accumulated phase enters the coherent error budget of a gate~\cite{blais2021,kurpiers2018}. \textbf{(c)} In a programmable photonic mesh, optical modes are coupled by beam splitters with tunable phase shifters, and each elementary square constitutes a Mach--Zehnder interferometer~\cite{reck1994,clements2016,bogaerts2020}. In all three cases, the quantity of physical interest is the phase accumulated around the square.}
\label{fig:systems}
\end{figure}
 
However, a measure that accounts for both the bipartite structure and complex weights is largely missing from the current literature: bipartite clustering coefficients capture four-cycle structure but do not contain phase information, whereas interferometric
measures retain phase information but are formulated on triangles, which are absent in bipartite
networks. This brings us to the question:

\begin{quote}
    \textit{How should local structure be quantified in a network that is both bipartite and complex weighted?}
\end{quote}

To close this gap, we introduce a gauge-invariant clustering coefficient for complex-weighted bipartite networks. The proposed coefficient is based on the phase accumulated around a four-node cycle, which is invariant under node-wise phase redefinitions and is therefore independent of the choice of phase convention. In the absence of phase differences, the coefficient reduces to the classical bipartite clustering coefficient. Its sign further captures the distinction between constructive and destructive interference, and its exact decomposition separates the effects of network connectivity from those of the phase structure. Our main contributions are as follows:
 
\begin{enumerate}\itemsep3pt
\item \textbf{We define a clustering coefficient for complex-weighted bipartite networks.} It is built from the real parts of square amplitudes, normalised by the strength of the three-link paths that could close into squares. It reduces to the classical bipartite clustering coefficient when the phases vanish, can become negative when alternative routes interfere destructively, and is invariant under redefinition of the phase references. The coefficient is computable for all nodes from a single sparse matrix product.
 
\item \textbf{We show that the coefficient factorises exactly.} It is the product of a factor
determined by the magnitudes and the connectivity and a factor determined by the phases. This separates the structural and phase-dependent components and makes the coefficient analytically tractable. The two effects can thus be measured, varied and interpreted separately.
 
\item \textbf{We derive closed-form results for a bipartite Watts--Strogatz ensemble.} For a ring lattice randomised with probability $p$ and with link phases drawn uniformly from $[-\Delta,\Delta]$, the mean phase contribution is $(\sin\Delta/\Delta)^4$, independent of node, degree, and topology. We also obtain closed-form expressions for the clustering coefficient, the visibility of open paths, and the variance of the phase contribution.
 
\item \textbf{We show how coherence enters the global structure.} The topological contribution is obtained numerically, as its dependence on $p$ does not admit a simple closed form. Phase disorder leaves hop-count distances unchanged but increases the distance experienced by coherent signals, so that combinatorial and coherent notions of distance no longer coincide. We also show that the phase contribution cancels from the small-world coefficient when a network is compared with a randomised version of itself, making its observed phase dependence a consequence of the choice of reference network.

\end{enumerate}
 
The remainder of the paper is organised as follows. In Section~\ref{sec:lit}, we review prior work and position our work within the literature. In Section~\ref{sec:framework}, we introduce the notation, define the clustering measures, and establish the gauge invariance of the proposed construction. In Section~\ref{sec:theory}, we derive their analytical properties, and in Section~\ref{sec:numerics}, we verify their behaviour numerically. In Section~\ref{sec:conclusion}, we discuss the implications of the theoretical and numerical results. Proofs are provided in Appendix~\ref{app:proofs}, the connections with existing measures are established in Appendix~\ref{app:relations}, and computational details are provided in Appendix~\ref{app:computation}.
 
\section{Related Literature}\label{sec:lit}

\paragraph{Clustering in bipartite networks.} Work on clustering in bipartite networks proceeds from the observation that the triangle count vanishes identically and replaces it by a count of four-cycles. Early approaches measured global cohesion through the ratio of four-cycles to three-link paths, yielding a statistic analogous to transitivity~\cite{robins2004}. Local coefficients were subsequently introduced based on the fraction of neighbour pairs that share additional common neighbours, with comparisons showing behaviour closely related to the triangle coefficient across the Watts--Strogatz transition~\cite{lind2005}. Later work defined node-level and link-level square coefficients and applied them to community detection, while also revisiting the distinction between global and local measures and the treatment of weights~\cite{zhang2008}. For weighted networks, a common approach averages the closing link using the strengths of the paths reaching it~\cite{zhanghorvath2005}, and alternative formulations have also been compared~\cite{saramaki2007}. All of these measures operate on real, non-negative weights and do not account for a phase.

\paragraph{Complex-weighted networks.} A separate line of work has extended network analysis to complex-valued edge weights. Broad treatments generalise a range of standard network measures and catalogue the domains in which Hermitian weight matrices arise~\cite{bottcher2024}. Other work classifies Hermitian complex-weighted networks as balanced, antibalanced, or strictly unbalanced, relates these classes to the spectrum, and analyses random walks and spectral clustering, with the magnetic Laplacian operator arising as a special case~\cite{tian2024}. This operator has an extensive history in mathematical physics~\cite{shubin1994,lange2015} and has also been applied to community detection and representation learning on directed graphs~\cite{fanuel2017,zhang2021magnet}. Of particular relevance here, the connection between structural balance and gauge invariance has been established through Cheeger inequalities for magnetic Laplacians~\cite{lange2015}. This correspondence is used in the gauge-invariance argument we develop in Section~\ref{sec:gauge}.

\paragraph{Interferometric measures.} Closest to our construction is the interferometric family introduced by Krawciw, Carr and Diniz Behn~\cite{krawciw2024} for directed complex-weighted networks. Their clustering coefficient compares the amplitude $W_{ki}W_{ij}$ of a two-step path through an intermediary with the amplitude $W_{kj}$ of the direct link that would close the triangle,
\begin{equation}
  C^{\mathrm{int}}_i=
  \frac{\sum_{j\neq k}|W_{ki}W_{ij}|
        \bigl(|W_{kj}+W_{ki}W_{ij}|-|W_{ki}W_{ij}|\bigr)}
       {\sum_{j\neq k}|W_{ki}W_{ij}|},
  \label{eq:krawciw}
\end{equation}

extending the framework in Ref.~\cite{zhanghorvath2005} and admitting negative values when the two amplitudes interfere destructively. They further define an apparent path length from the resolvent of the weight matrix, which accounts for walks of different lengths, and report that constructive interference strengthens the small-world effect while destructive interference weakens it. The coefficient defined in Section~\ref{sec:clustering} is the bipartite counterpart of Equation~\eqref{eq:krawciw}, with two two-step paths through distinct intermediaries replacing a path and a link, and we adopt their apparent path length without modification.
 
\paragraph{Small-worldness and coherent transport.} Two further aspects of the literature are relevant to interpreting our results. The first concerns the quantification of small-worldness. The coefficient introduced by Humphries and Gurney~\cite{humphries2008} compares clustering and path length with those of a randomised reference network. Because the reference clustering appears in the denominator and is typically very small, the coefficient can be highly sensitive to slight changes
in it~\cite{telesford2011}. As shown in Section~\ref{sec:sw}, complex weights make this issue more pronounced, since with fully disordered phases the reference clustering has mean zero. The second aspect concerns coherent transport. Continuous-time quantum walks provide a standard model of amplitude propagation on graphs~\cite{mulken2011}, and photonic implementations are naturally realised as bipartite lattices of modes and couplers~\cite{peruzzo2010}. In such systems, destructive interference can suppress a transfer that would occur under classical dynamics~\cite{mohseni2008}. Phase disorder can likewise be represented by assigning random phases to the links of a lattice, as in the random-flux model of localisation theory~\cite{furusaki1999,altland1999}. In this framework, the physically relevant quantity is the distribution of fluxes, which is consistent with the invariance imposed here.

No existing measure is simultaneously bipartite, phase-sensitive, and invariant under redefinition of the phase references. We fill this gap by constructing a coefficient built on the square flux, which satisfies the requirements simultaneously.

\section{Complex-Weighted Bipartite Clustering Coefficients}\label{sec:framework}

We begin by fixing the notation for complex-weighted bipartite networks and showing that the four-node square carries the phase information. We then define the clustering coefficient together with the open-path visibility and the coherent distance, before describing the ensemble on which the results are obtained.
 
\subsection{Complex-Weighted Bipartite Networks}
 
A bipartite network is a triple $G=(\Uset,\Vset,E)$ in which $\Uset$ and $\Vset$ are disjoint sets of nodes, of sizes $n_U$ and $n_V$, and $E\subseteq \Uset\times\Vset$ is the set of links. Each link carries a complex weight
\begin{equation}
  w_{uv}=|w_{uv}|\,e^{i\varphi_{uv}},
  \qquad |w_{uv}|>0,\quad \varphi_{uv}\in[-\pi,\pi),
\end{equation}
whose modulus $|w_{uv}|$ is the strength and whose argument $\varphi_{uv}$ is the phase acquired in passing from $u$ to $v$. We impose $w_{vu}=\overline{w_{uv}}$, so that traversing a link and returning acquires no net phase. The weight matrix $A$ is then Hermitian,
\begin{equation}
  A=\begin{pmatrix}0 & B\\ B^{\mathsf H} & 0\end{pmatrix},
  \qquad B\in\C^{n_U\times n_V},\quad B_{uv}=w_{uv},
  \label{eq:adjacency}
\end{equation}
where the diagonal blocks vanish because links join only nodes of different classes. Hermiticity generalises the symmetry of a real-weighted undirected network and expresses reciprocity in wave systems and time-reversal symmetry in quantum mechanics~\cite{bottcher2024,tian2024}. All structural information resides in the biadjacency matrix $B$, which is the object we compute with. The amplitude of a path or cycle $P$ is the ordered product of its link weights, $W(P)=\prod_{(x,y)\in P}w_{xy}$. We write $R=|B|$ for the elementwise modulus of $B$ and $R^{\circ2}$ for its elementwise  \emph{square}, $\Delta$ for the half-width of the phase distribution, $N(u)$ for the set of neighbours of node $u$, and $q_{uu'}$ for the number of neighbours shared by two nodes $u$ and $u'$ of the same class. We denote averages over nodes by an overbar.
 
\subsection{The Square as the Elementary Unit of Interference}\label{sec:gauge}
 
Since the shortest cycles have length four, squares constitute the fundamental motif, and we define them together with the three-link paths that could close into one.
 
\begin{definition}[Ordered squares and three-link paths]\label{def:sets}
For $u\in\Uset$ let
\[
  S(u)=\bigl\{(v_1,u',v_2):v_1\neq v_2\in\Vset,\ u'\neq u\in\Uset,\
  (u,v_1),(u',v_1),(u',v_2),(u,v_2)\in E\bigr\}
\]
denote the ordered squares incident to $u$, and let
\[
  T(u)=\bigl\{(v_1,u',v_2):v_1\neq v_2,\ u'\neq u,\
  (u,v_1),(u',v_1),(u',v_2)\in E\bigr\}\supseteq S(u)
\]
denote the ordered three-link paths leaving $u$, for which the closing link $(u,v_2)$ need not exist. Their amplitudes are $W_s=w_{uv_1}w_{v_1u'}w_{u'v_2}w_{v_2u}$ for a square and $W_t=w_{uv_1}w_{v_1u'}w_{u'v_2}$ for a path, and $m_u$ denotes the number of ordered squares incident to a node $u$, thus $m_u=|S(u)|$. The corresponding definitions for $v\in\Vset$ follow on replacing $B$ by $B^{\mathsf H}$.
\end{definition}
 
Each geometric square occurs twice in $S(u)$, and both orderings are retained. Their amplitudes are conjugate, so the coherent sum over $S(u)$ is real (Proposition~\ref{prop:props}\,(vi)). In terms of the biadjacency matrix,
\begin{equation}
  W_s=B_{uv_1}\overline{B_{u'v_1}}B_{u'v_2}\overline{B_{uv_2}},
  \qquad
  \Phi_s:=\arg W_s =\varphi_{uv_1}-\varphi_{u'v_1}+\varphi_{u'v_2}-\varphi_{uv_2},
  \label{eq:Ws}
\end{equation}
and $\Phi_s$ is termed the flux of the square. In the terminology of lattice gauge theory it is the flux through a plaquette~\cite{kogut1979}, and it is the quantity entering a tight-binding Hamiltonian through the Peierls substitution~\cite{shubin1994}. The alternating signs follow from the Hermitian convention, two of the four steps running against the orientation in which their phase was defined.
 
The flux is invariant under a change of phase reference. Writing $\theta_u$ for the shift applied at $u\in\Uset$ and $\eta_v$ for the shift at $v\in\Vset$, the substitution $\varphi_{uv}\mapsto\varphi_{uv}+\theta_u+\eta_v$ corresponds to $B\mapsto D_\theta BD_\eta$ with $D_\theta=\mathrm{diag}(e^{i\theta_u})$ and $D_\eta=\mathrm{diag}(e^{i\eta_v})$, and constitutes a gauge transformation.
 
\begin{proposition}[Invariance under a change of phase reference]\label{prop:gauge}
The amplitude $W_s$, and consequently both $\Phi_s$ and $|W_s|$, are unchanged by every gauge transformation.
\end{proposition}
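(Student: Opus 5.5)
The plan is a direct substitution into the matrix form \eqref{eq:Ws} of the square amplitude. Under the gauge transformation $B\mapsto D_\theta B D_\eta$ each entry transforms as $B_{uv}\mapsto e^{i\theta_u}B_{uv}e^{i\eta_v}$, and its conjugate as $\overline{B_{uv}}\mapsto e^{-i\theta_u}\,\overline{B_{uv}}\,e^{-i\eta_v}$. I will insert these into
\[
W_s=B_{uv_1}\overline{B_{u'v_1}}B_{u'v_2}\overline{B_{uv_2}}
\]
and collect the phase factors.

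The first key step is the bookkeeping of these factors. The product acquires
\[
e^{i(\theta_u+\eta_{v_1})}\,e^{-i(\theta_{u'}+\eta_{v_1})}\,e^{i(\theta_{u'}+\eta_{v_2})}\,e^{-i(\theta_u+\eta_{v_2})}.
\]
Each node of the square appears exactly twice, once with a plus sign and once with a minus sign. This happens because the cycle $u\to v_1\to u'\to v_2\to u$ alternates classes, and every node is entered along one link and left along another, with the Hermitian convention $w_{vu}=\overline{w_{uv}}$ supplying the conjugation on the reversed steps. The exponent is therefore zero, and $W_s$ is unchanged.

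The remaining claims then follow at once. The modulus $|W_s|$ is a function of $W_s$, and since $|W_s|>0$ the argument is well defined, so $\Phi_s=\arg W_s$ is also invariant. Equivalently, in the additive form of \eqref{eq:Ws}, the shifts $\theta_u,\eta_{v_1},\theta_{u'},\eta_{v_2}$ cancel pairwise modulo $2\pi$. It is worth noting, though not needed, that $|W_s|$ is trivially invariant because every entry's modulus is preserved.

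I expect no real obstacle. The only point needing care is the sign convention. The claim requires that each node contribute its shift once with each sign, and this relies on the Hermitian convention; with a non-Hermitian convention the path amplitude $W_t$ would not be invariant. I would also remark that the argument applies verbatim for $v\in\Vset$ with $B$ replaced by $B^{\mathsf H}$, which transforms as $D_\eta^{-1}B^{\mathsf H}D_\theta^{-1}$ since $D_\theta$ and $D_\eta$ are unitary diagonal.
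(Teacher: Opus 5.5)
Your proof is correct and is the same single cancellation as the paper's: the paper substitutes the shifts into the additive flux $\Phi_s$ and notes that moduli are untouched, whereas you substitute into the multiplicative form of $W_s$, which gives invariance of $W_s$ directly and of $\Phi_s$ (mod $2\pi$) and $|W_s|$ as consequences. One side remark is slightly off but does not affect the proof: the open-path amplitude $W_t$ is not gauge invariant even under the Hermitian convention, since it retains the factor $e^{i(\theta_u+\eta_{v_2})}$ from its unmatched endpoints; only $|W_t|$ is invariant, which is all the paper needs.
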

 
The proof, given in Appendix~\ref{app:proofs}, consists of a single cancellation, each of $\theta_u$, $\theta_{u'}$, $\eta_{v_1}$ and $\eta_{v_2}$ entering~\eqref{eq:Ws} once with each sign. Figure~\ref{fig:schematic} illustrates the structure and its two readings, as a plaquette and as a two-slit interferometer.
 
\begin{figure}[t]
\centering
\includegraphics[width=0.9\textwidth]{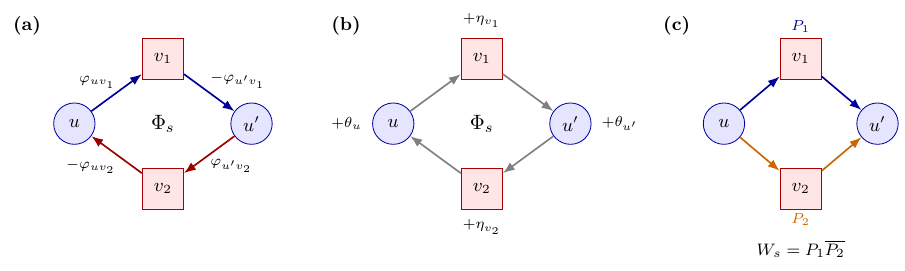}
\caption{Circles denote nodes of $\Uset$ and squares nodes of $\Vset$,
and $\varphi_{uv}$ is the phase of the link between $u$ and $v$. \textbf{(a)} The flux $\Phi_s$ accumulated around a square is an alternating sum of the four link phases, since under the convention $w_{vu}=\overline{w_{uv}}$ two of the four steps run against the orientation in which their phase was defined and contribute the conjugate phase. \textbf{(b)} A redefinition of the phase reference at each node shifts every link phase by $\theta_u+\eta_v$. Each node contributes once with each sign to the alternating sum, leaving $\Phi_s$ unchanged. As these shifts are conventions rather than observable properties, only quantities such as $\Phi_s$ that survive them possess physical meaning.
\textbf{(c)} The same square read as a two-slit interferometer. The two two-step paths $P_1$ and $P_2$ from $u$ to $u'$ combine as $W_s=P_1\overline{P_2}$, whose phase is their difference, so that $\Phi_s$ determines whether the two routes reinforce or cancel.}
\label{fig:schematic}
\end{figure}
 
A square exhibits perfect constructive interference when $\Phi_s=0$ and perfect destructive interference when $\Phi_s=\pi$, and more generally $\cos\Phi_s$ measures the degree of reinforcement. A network is phase coherent when all fluxes vanish and exhibits phase disorder otherwise.
 
\begin{remark}[Relation to structural balance]\label{rem:balance}
A complex-weighted network is structurally balanced when every cycle carries zero flux, equivalently when some choice of phase references renders all weights positive real~\cite{tian2024,lange2015}. Vanishing square fluxes therefore constitutes the signature of balance at the level of the smallest motif. The condition is necessary but not in general sufficient, since squares need not span the cycle space of the network. The phase factor $Q_u$ defined below is accordingly a local and normalised measure of frustration, in the sense of the frustration index of~\cite{lange2015}.
\end{remark}
 
\subsection{Clustering Coefficient for Complex-Weighted Bipartite Networks}\label{sec:clustering}
 
Two requirements determine the form of the measure. Amplitudes must be combined before any modulus is taken, since the modulus of a single square amplitude is a product of moduli and carries no phase information. The normalisation must count the squares that could have formed rather than those that did, so that setting all phases to zero returns a classical clustering coefficient in place of a constant.
 
\begin{definition}[Complex square-based clustering coefficient]\label{def:C}
For a node $u$ with $T(u)\neq\emptyset$,
\begin{equation}
  C_u=\frac{\sum_{s\in S(u)}\Rey W_s}{\sum_{t\in T(u)}|W_t|}  =\frac{\displaystyle\sum_{u'\neq u}\sum_{v_1\neq v_2} \Rey\bigl(w_{uv_1}w_{v_1u'}w_{u'v_2}w_{v_2u}\bigr)} {\displaystyle\sum_{u'\neq u}\sum_{v_1\neq v_2} \bigl|w_{uv_1}w_{v_1u'}w_{u'v_2}\bigr|}
  \label{eq:C}
\end{equation}
and $\bar C$ denotes the average of $C_u$ over all nodes with $T(u)\neq\emptyset$.
\end{definition}
 
The denominator sums the strengths of all three-link paths leaving $u$, and the numerator retains those closing into a square, each weighted by the cosine of its flux. A node whose alternative routes reinforce contributes positively and one whose routes cancel contributes negatively. The coefficient admits a factorisation into two parts of distinct character.
 
\begin{definition}[Topological and phase factors]\label{def:factors}
\begin{equation}
  C^{(0)}_u=\frac{\sum_{s\in S(u)}|W_s|}{\sum_{t\in T(u)}|W_t|},
  \qquad
  Q_u=\frac{\sum_{s\in S(u)}\Rey W_s}{\sum_{s\in S(u)}|W_s|}
     =\frac{\sum_{s\in S(u)}|W_s|\cos\Phi_s}{\sum_{s\in S(u)}|W_s|},
\end{equation}
so that $C_u=C^{(0)}_u\,Q_u$ identically.
\end{definition}
 
The topological factor $C^{(0)}_u$ depends on the moduli and the connectivity alone and coincides with the classical weighted bipartite clustering coefficient. The phase factor $Q_u$ is a strength-weighted average of the plaquette cosines and depends on the fluxes alone. This decomposition permits the ensemble of Section~\ref{sec:model} to be solved.
 
\begin{proposition}[Range, invariance and classical limits]\label{prop:props}
 Assume $\max_{(u,v)\in E}|w_{uv}|\le1$. For every $u$ with $T(u)\neq\emptyset$,
\begin{enumerate}[label=(\roman*)]
\item $Q_u\in[-1,1]$, $C_u^{(0)}\in[0,1]$ and $C_u\in[-1,1]$, negative values indicating that
alternative routes cancel on average. The negative half of this range is not attained in general,
and Proposition~\ref{prop:range} gives the sharp bound.
\item $C_u$, $C_u^{(0)}$ and $Q_u$ are unchanged by any redefinition of the phase references.
\item If all phases vanish then $Q_u=1$ and $C_u=C_u^{(0)}$, the weighted bipartite square
clustering coefficient of Zhang--Horvath type. If in addition all $|w_{uv}|=1$, then
$C_u=m_u/|T(u)|$, the fraction of three-link paths from $u$ that close into a square, which is the
local form of the four-cycle to three-path ratio of~\cite{robins2004}.
\item If the network is structurally balanced then $C_u=C_u^{(0)}$.
\item With uniform strengths, $Q_u=1$ precisely when every square at $u$ has zero flux. The
corresponding condition at the lower end, that every square carry flux $\pi$, is in general
unsatisfiable, since the fluxes on a common pair of nodes are not independent; see
Proposition~\ref{prop:range}.
\item $\sum_{s\in S(u)}W_s$ is real, each square appearing in both directions of traversal and the
two contributing conjugate amplitudes.
\end{enumerate}
\end{proposition}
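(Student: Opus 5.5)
The plan is to prove the six items in a convenient order, beginning with (vi) and (ii), since the others rely on them.

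For (vi), I pair each ordered square $s=(v_1,u',v_2)$ with its reversal $\bar s=(v_2,u',v_1)$. This is an involution on $S(u)$ without fixed points, because $v_1\neq v_2$. By \eqref{eq:Ws},
\[
W_{\bar s}=B_{uv_2}\overline{B_{u'v_2}}B_{u'v_1}\overline{B_{uv_1}}=\overline{W_s},
\]
so $\sum_{s\in S(u)} W_s=\sum_{s\in S(u)}\Rey W_s$ is real. For (ii), I note that a gauge transformation leaves the edge set and the moduli unchanged, so $S(u)$, $T(u)$, every $|W_t|$ and every $|W_s|$ are invariant. By Proposition~\ref{prop:gauge} each $W_s$ is invariant as well. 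Hence the numerators and denominators of $C_u$, $C^{(0)}_u$ and $Q_u$ are all unchanged.

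For (i), I argue as follows.
\begin{itemize}
\item $Q_u$ is a convex combination of the values $\cos\Phi_s$ with weights $|W_s|>0$, so $Q_u\in[-1,1]$. This requires $S(u)\neq\emptyset$; if $S(u)$ is empty I adopt the convention $Q_u=1$, or else restrict the claim to nodes with $m_u>0$, in which case $C_u=0$ anyway.
\item For $C^{(0)}_u$, nonnegativity is clear. The upper bound $C^{(0)}_u\le 1$ follows because $S(u)\subseteq T(u)$ and, for $s\in S(u)$ with underlying path $t$, $|W_s|=|W_t|\,|w_{uv_2}|\le|W_t|$, using $\max|w|\le 1$. This is the only place the normalisation hypothesis is needed.
\item Finally $|C_u|=C^{(0)}_u|Q_u|\le 1$.
\end{itemize}
The remark that negative values are not generally attained is deferred to Proposition~\ref{prop:range}, so nothing needs to be proved here beyond the inclusion.

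For (iii), if all phases vanish then every $W_s=|W_s|$, so $Q_u=1$ and $C_u=C^{(0)}_u$. I then identify $C^{(0)}_u$ with the Zhang--Horvath form by reading off the definition. With unit moduli each $|W_s|=|W_t|=1$, giving $C_u=m_u/|T(u)|$.

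For (iv), structural balance (Remark~\ref{rem:balance}) means there is a gauge rendering all weights positive real. By (ii) I may compute in that gauge, and (iii) then gives $C_u=C^{(0)}_u$.

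For (v), with uniform strengths all $|W_s|$ are equal, so $Q_u$ is the plain mean of the values $\cos\Phi_s$. This mean equals $1$ iff every $\cos\Phi_s=1$, that is, iff every flux is $0$ modulo $2\pi$.

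For the unsatisfiability claim in (v), I need a single explicit obstruction. Take $u,u'$ with three common neighbours $v_1,v_2,v_3$. Writing $\psi_j=\varphi_{uv_j}-\varphi_{u'v_j}$, the flux of the square $(v_j,u',v_k)$ is $\psi_j-\psi_k$. The three fluxes $\psi_1-\psi_2$, $\psi_2-\psi_3$ and $\psi_3-\psi_1$ sum to $0$, so they cannot all equal $\pi$ modulo $2\pi$, since $3\pi\not\equiv 0$. This shows the lower end cannot be reached whenever $q_{uu'}\ge 3$.

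I expect the main obstacle to be bookkeeping rather than substance: the degenerate case $S(u)=\emptyset$ for $Q_u$, and stating precisely the sense of "in general" in (v). I would handle the latter by the $q_{uu'}\ge3$ example above and leave the sharp bound to Proposition~\ref{prop:range}.
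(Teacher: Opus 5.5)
Your proposal is correct and follows the paper's proof item by item. It uses the reversal involution for (vi), invariance of each $W_s$ and of the moduli for (ii), the convex-combination bound together with $S(u)\subseteq T(u)$ and $|W_s|=|W_t|\,|w_{uv_2}|\le|W_t|$ for (i), and passage to the positive-real gauge combined with (ii) and (iii) for (iv). Your explicit three-square obstruction for the lower end in (v) and your handling of the case $S(u)=\emptyset$ are small additions: the paper defers the former to Proposition~\ref{prop:range} and leaves the latter implicit.
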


The upper extreme of Proposition~\ref{prop:props}\,(v) is reached by any balanced network,
whereas the lower extreme is obstructed by the geometry of the flux field. Squares built on the
same pair of nodes satisfy additive relations, since the flux of the square through $v_1$ and
$v_3$ is the sum of the fluxes through $v_1,v_2$ and through $v_2,v_3$, so that three squares on a
common pair cannot all carry flux $\pi$. Destructive interference is self-limiting in this sense,
and the following bound quantifies it.

\begin{proposition}[Attainable range of the phase factor]\label{prop:range}
Fix $u\in\mathcal{U}$ and suppose every edge has strength one. Then
\begin{equation}
Q_u\ \ge\ -\frac{\sum_{u'}q_{uu'}}{\sum_{u'}q_{uu'}(q_{uu'}-1)},
\label{eq:range}
\end{equation}
both sums running over the nodes $u'\neq u$ that share at least two neighbours with $u$, and the
bound is attained by a choice of phases. When $u$ shares neighbours with a single node $u'$ the
bound reads $-1/(q_{uu'}-1)$, so that $Q_u=-1$ is attainable only if no node shares more than two
neighbours with $u$.
\end{proposition}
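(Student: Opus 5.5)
The approach is to rewrite the numerator of $Q_u$ as a sum of squared moduli, one for each same-class partner $u'$. With unit strengths the lower bound then follows from $|\cdot|^2\ge 0$, and attainment reduces to making certain sums of unit vectors vanish.

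\textbf{Step 1 (reduction to per-partner sums).} Every strength is one, so $|W_s|=1$ for every square. Hence $\sum_{s\in S(u)}|W_s|=m_u=\sum_{u'}q_{uu'}(q_{uu'}-1)$, since each $u'$ contributes one ordered square per ordered pair $v_1\neq v_2$ of common neighbours. Nodes with $q_{uu'}\le1$ contribute nothing, so the sums may be restricted to $q_{uu'}\ge2$. For each $u'$ and each common neighbour $v\in N(u)\cap N(u')$, I will set $z_v^{(u')}=w_{uv}\overline{w_{u'v}}$, which is a unit complex number. By \eqref{eq:Ws}, $W_s=z_{v_1}^{(u')}\overline{z_{v_2}^{(u')}}$, which gives
\[
\sum_{s\in S(u)}W_s=\sum_{u'}\sum_{v_1\neq v_2}z_{v_1}^{(u')}\overline{z_{v_2}^{(u')}}
=\sum_{u'}\Bigl(\Bigl|\sum_{v}z_v^{(u')}\Bigr|^2-q_{uu'}\Bigr).
\]
This expression is real, consistent with Proposition~\ref{prop:props}\,(vi). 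It is bounded below by $-\sum_{u'}q_{uu'}$. Dividing by $m_u$ yields \eqref{eq:range}.

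\textbf{Step 2 (attainment).} The bound is attained exactly when $\sum_v z_v^{(u')}=0$ for every $u'$ with $q_{uu'}\ge2$. I will fix $w_{uv}=1$ for all $v\in N(u)$. Then, for each such $u'$, I assign to the edges $(u',v)$ with $v$ a common neighbour the phases of the $q_{uu'}$-th roots of unity, so that the sum $\sum_v \overline{w_{u'v}}$ over these edges vanishes. Distinct partners $u'$ use disjoint edge sets $\{(u',v)\}$, so these choices never conflict. Edges not incident to $u$ or to any such $u'$ can be assigned arbitrary phases. Since $q_{uu'}\ge2$, the roots of unity sum to zero.

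\textbf{Step 3 (special cases).} For a single partner the bound is $-q/(q(q-1))=-1/(q-1)$. In general, $q(q-1)\ge q$ holds for $q\ge2$, with equality exactly at $q=2$. So the bound equals $-1$ iff every $q_{uu'}=2$, and otherwise it is strictly greater than $-1$. This gives the final claim.

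The step I expect to be the main obstacle is the correct identification $W_s=z_{v_1}\overline{z_{v_2}}$, which fixes the orientation and conjugation conventions. Once that holds, the rest is algebra.
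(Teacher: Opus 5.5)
Your proposal is correct and follows the paper's proof essentially step for step. Your $z_v^{(u')}$ are the paper's $e^{i\psi_j}$, and the argument has the same three parts: the per-partner identity $\sum_{j\neq l}\cos(\psi_j-\psi_l)=\bigl|\sum_j e^{i\psi_j}\bigr|^2-q$, attainment by equally spaced phasors on the disjoint link sets at each partner $u'$, and division by $m_u$. Your Step 3 also proves the general ``$Q_u=-1$ only if no node shares more than two neighbours'' claim, using $q(q-1)\ge q$ with equality iff $q=2$; the paper's proof checks only the single-partner case and leaves this general statement implicit.
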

 
The bound is the statement that the open-path visibility of Section~\ref{sec:openpath} cannot be
negative, as the identity of Appendix~\ref{app:relations} makes explicit, and equality holds precisely
when every pair through which $u$ participates in a square has vanishing visibility. The two
extremes of the coefficient are therefore not symmetric. Perfect reinforcement is available to any
network, while complete cancellation requires a locally sparse arrangement of routes, and a node
of high degree in a densely overlapping neighbourhood cannot register strong net destructive
interference however its phases are chosen.
The same additive relation, $\Phi_{13}=\Phi_{12}+\Phi_{23}$ for the squares on a common pair,
holds identically for every realisation of the phases, so the fluxes at a node are never mutually independent. Theorem \ref{thm:resolution} is derived accordingly, from pairwise covariances.
 
Evaluation of Equation~\eqref{eq:C} does not require squares to be enumerated. Define
\begin{equation}
M=BB^H,\quad N=RR^T,\quad D=R^{\circ2}(R^{\circ2})^T,\quad F=R(R^{\circ2})^T,\quad s_{u'}=\sum_vR_{u'v},
\label{eq:matrices}
\end{equation}
so that $M_{uu'}$ is the total amplitude of all two-step paths between $u$ and $u'$, $N_{uu'}$ the
corresponding sum of moduli, and $s_{u'}$ the strength of node $u'$.
 
\begin{proposition}[Evaluation by sparse matrix products]\label{prop:matrix}
For every $u\in\Uset$,
\begin{align}
  \sum_{s\in S(u)}\Rey W_s &= \sum_{u'\neq u}\bigl(|M_{uu'}|^{2}-D_{uu'}\bigr),
  \label{eq:num}\\
  \sum_{s\in S(u)}|W_s| &= \sum_{u'\neq u}\bigl(N_{uu'}^{2}-D_{uu'}\bigr),
  \label{eq:inc}\\
  \sum_{t\in T(u)}|W_t| &= \sum_{u'\neq u}\bigl(N_{uu'}s_{u'}-F_{uu'}\bigr),
  \label{eq:den}
\end{align}
with the corresponding identities for $\Vset$ obtained on replacing $B$ by $B^{\mathsf H}$.
\end{proposition}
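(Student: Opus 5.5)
The plan is to fix $u$ and partition each sum over $S(u)$ or $T(u)$ by the second node $u'\neq u$ of the same class. For fixed $u'$, each sum becomes a double sum over ordered pairs $(v_1,v_2)$ with $v_1\neq v_2$. I will evaluate the corresponding unrestricted sum over all $(v_1,v_2)$ as a product of two single sums, which are entries of the matrices in \eqref{eq:matrices}. The diagonal $v_1=v_2$ is then subtracted separately. Non-edges need no special treatment: $B_{uv}=0$ whenever $(u,v)\notin E$, so they contribute zero automatically and the sums may run over all of $\Vset$.

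\textbf{Numerator \eqref{eq:num}.} By \eqref{eq:Ws}, $W_s=B_{uv_1}\overline{B_{u'v_1}}\,B_{u'v_2}\overline{B_{uv_2}}$. Summing over all $v_1,v_2$ factorises the sum as $\bigl(\sum_{v_1}B_{uv_1}\overline{B_{u'v_1}}\bigr)\overline{\bigl(\sum_{v_2}B_{uv_2}\overline{B_{u'v_2}}\bigr)}=M_{uu'}\overline{M_{uu'}}=|M_{uu'}|^2$. The diagonal terms $v_1=v_2=v$ equal $|B_{uv}|^2|B_{u'v}|^2$, and summing them over $v$ gives $(R^{\circ2}(R^{\circ2})^T)_{uu'}=D_{uu'}$. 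The remainder, $|M_{uu'}|^2-D_{uu'}$, is already real, consistent with Proposition~\ref{prop:props}\,(vi), so taking real parts changes nothing. Summing over $u'\neq u$ then gives \eqref{eq:num}.

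\textbf{Incidence sum \eqref{eq:inc}.} The same computation applies with $B$ replaced by $R$, since $|W_s|=R_{uv_1}R_{u'v_1}R_{u'v_2}R_{uv_2}$. The unrestricted sum is $N_{uu'}^2$ and the diagonal is again $D_{uu'}$.

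\textbf{Denominator \eqref{eq:den}.} Here $|W_t|=R_{uv_1}R_{u'v_1}R_{u'v_2}$, and $v_2$ ranges over all of $\Vset$ because the closing link is not required. The unrestricted sum is $\bigl(\sum_{v_1}R_{uv_1}R_{u'v_1}\bigr)\bigl(\sum_{v_2}R_{u'v_2}\bigr)=N_{uu'}s_{u'}$. The diagonal is $\sum_v R_{uv}R_{u'v}^2=(R(R^{\circ2})^T)_{uu'}=F_{uu'}$.

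The only delicate point is bookkeeping against Definition~\ref{def:sets}, which is where I expect the main obstacle. I will check that the constraints $u'\neq u$ and $v_1\neq v_2$ are exactly the ones removed by the diagonal subtraction and the outer restriction. I will also check that the required edges correspond precisely to nonzero factors: for $T(u)$, $v_2$ must be adjacent to $u'$ but need not be adjacent to $u$, matching the factor $s_{u'}$. For $\Vset$, I will repeat the argument with $B^{\mathsf H}$, since $S(v)$ and $T(v)$ are defined by that substitution.
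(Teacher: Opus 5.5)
Your proposal is correct and follows essentially the same route as the paper's proof. Both sum over all $(v_1,v_2)$ so that the sums factorise into $|M_{uu'}|^2$, $N_{uu'}^2$ and $N_{uu'}s_{u'}$, then subtract the diagonal terms $v_1=v_2$, which give $D_{uu'}$, $D_{uu'}$ and $F_{uu'}$, and the only cosmetic difference is that you write the second factor directly as $\overline{M_{uu'}}$ where the paper writes $M_{u'u}$ and invokes the Hermiticity of $M$.
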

 
By~\eqref{eq:num} the coherent sum over every square at a node equals $\sum_{u'}|M_{uu'}|^{2}$, corrected for the degenerate terms in which the two intermediaries coincide. Under sparse arithmetic all four matrices are formed in $O(nk^2)$ operations for a network of $n$ nodes and degree $k$, so that the node-level profile requires $O(nk^2)$ time and memory, in place of $O(n^2k)$ for explicit enumeration.
 
\subsection{Open Paths and Coherent Distance}\label{sec:openpath}
 
Interference is not confined to closed motifs. Two routes between distinct nodes also combine
as amplitudes, and the distance a coherent signal experiences is not the number of links it crosses.
We therefore introduce two further measures, one for open paths and one for distance.
 
For two nodes $u,u'\in\Uset$ sharing $q_{uu'}\geq1$ neighbours, the total amplitude carried by their two-step paths is $M_{uu'}=\sum_vw_{uv}w_{vu'}$, and the \textit{open-path visibility} is defined as
\begin{equation}
  I^{(2)}_{uu'}=\frac{|M_{uu'}|}{N_{uu'}}
  =\frac{\bigl|\sum_{v}w_{uv}w_{vu'}\bigr|}{\sum_{v}|w_{uv}w_{vu'}|}\in[0,1],
  \label{eq:I2}
\end{equation}
the sums running over the shared neighbours. It equals unity when the routes reinforce perfectly and zero when they cancel completely. A change of phase reference multiplies $M_{uu'}$ by a factor of unit modulus, so that $I^{(2)}_{uu'}$ is invariant in the same sense as the clustering coefficient. The quantity is the bipartite analogue of the visibility of a multi-slit interference pattern, the shared neighbours playing the role of the slits.
 
Distances require more care, since complex numbers cannot be ordered and no canonical shortest path exists. The simplest choice is the hop-count distance, whose average over pairs $L$ counts links and disregards what they carry. It measures combinatorial reachability and, being determined by which links exist, cannot respond to phase. The \emph{coherent distance} between two nodes implies the exponent at which the amplitude reaching one from the other decays, so that a pair whose competing routes cancel is further apart than a pair of the same hop count whose routes reinforce. We realise it through apparent path length of Krawciw \emph{et al.}~\cite{krawciw2024}. The propagator
\begin{equation}
  G=(\mathbb I-\alpha A)^{-1}=\sum_{\ell\geq0}\alpha^{\ell}A^{\ell},
  \qquad \alpha<1/\rho(|A|),
\end{equation}
sums the amplitudes of all walks with a penalty $\alpha^{\ell}$ for length $\ell$, so that $G_{xy}$ is the steady-state amplitude reaching $x$ from a unit source at $y$, the structure that also governs coherent transport in continuous-time quantum walks~\cite{mulken2011}. The apparent distance is
\begin{equation}
  a_{xy}=\frac{\log|G_{xy}|}{\log\alpha}, \qquad
  \bar a=\frac{1}{n_Un_V}\sum_{u\in\Uset,v\in\Vset}a_{uv},
  \label{eq:apparent}
\end{equation}
which returns the hop count when a single walk dominates and the phases vanish, and increases when competing walks cancel. We fix $\alpha=1/(2k)$ throughout, so that comparisons across parameters are made at constant length penalty.
 
\subsection{A Bipartite Watts-Strogatz Ensemble}\label{sec:model}
 
The ensemble is a bipartite Watts--Strogatz model, with rewiring probability $p$ controlling the topology and half-width $\Delta$ controlling the phases. Each node $u\in\Uset$ is first joined to the $k$ nodes $v=(u+j)\bmod n_V$ for $j=0,\dots,k-1$, producing a banded ring lattice with high local closure and no long-range links. Every link is then, independently with probability $p$, detached from its endpoint in $\Vset$ and reattached to a uniformly chosen non-neighbour of the same node in $\Uset$, which preserves both the degrees in $\Uset$ and the bipartite structure. At $p=0$ the network is regular, and at $p=1$ it is a random bipartite network with fixed degrees in $\Uset$. Each link finally receives a phase drawn independently from the uniform distribution on $[-\Delta,\Delta]$, so that $\Delta=0$ yields a phase-coherent network and $\Delta=\pi$ complete phase disorder. Edge weights are set to unity in the main text and drawn uniformly from $(0.5,1)$ in the robustness checks of Appendix~\ref{app:computation}.
 
This ensemble is analytically tractable in the following sense. Rewiring modifies only the network connectivity, and phase disorder modifies only the complex weights assigned to the existing links; the two parameters therefore act on distinct degrees of freedom. Since $C_u=C^{(0)}_uQ_u$, with $C^{(0)}_u$ determined solely by the connectivity and $Q_u$ solely by the link phases, the expectation of the clustering coefficient factorises into a contribution depending on $p$ and a contribution depending on $\Delta$. The phase-dependent contribution, and hence the full effect of phase disorder on the local clustering structure, can then be obtained in closed form, as shown in the following section. Only the connectivity-dependent contribution requires numerical evaluation.

\section{Theoretical Characterisation}\label{sec:theory}
The factorisation $C_u=C^{(0)}_uQ_u$ separates the effects of network topology and phase coherence. $C^{(0)}_u$ depends only on connectivity and retains the structural dependence of the classical Watts--Strogatz model, whereas $Q_u$ depends only on the square fluxes and can be analysed independently. Together with the corresponding result for open paths, the analysis of $Q_u$ determines in closed form the entire effect of phase disorder on local structure in this ensemble, leaving only the topological contribution to be measured.

Throughout, edge weights are set to unity and link phases are drawn independently with characteristic function $\chi(t)=\E[e^{it\varphi}]$. For the uniform distribution on $[-\Delta,\Delta]$,
\[
\chi(t)=\frac{\sin(t\Delta)}{t\Delta},
\qquad\text{so that}\qquad
\chi(1)=\frac{\sin\Delta}{\Delta}\equiv\mathrm{sinc}\Delta
\quad\text{and}\quad
\chi(2)=\frac{\sin (2\Delta)}{2\Delta},
\]
both real. We write
$r=|\chi(1)|^2$, the mean of $e^{i(\varphi-\varphi')}$ for two independent link phases, which
recurs throughout. Proofs are given in Appendix~\ref{app:proofs}.
 
\begin{theorem}[Mean phase factor]\label{thm:meanphasefactor}
For any bipartite network and any node $u$ lying on at least one square,
\begin{equation}
  \E[Q_u]=|\chi(1)|^{4},
  \qquad\text{and for }\varphi\sim\mathcal U[-\Delta,\Delta],\qquad
  \E[Q_u]=\Bigl(\frac{\sin\Delta}{\Delta}\Bigr)^{4}.
  \label{eq:sinc4}
\end{equation}
\end{theorem}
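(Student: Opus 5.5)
With unit strengths, every square has $|W_s|=1$. The denominator of $Q_u$ is then the deterministic integer $m_u=|S(u)|$, which depends only on the connectivity and not on the phases. Hence
\[
Q_u=\frac{1}{m_u}\sum_{s\in S(u)}\cos\Phi_s ,\qquad \E[Q_u]=\frac{1}{m_u}\sum_{s\in S(u)}\E[\cos\Phi_s]=\frac{1}{m_u}\sum_{s\in S(u)}\Rey\,\E\bigl[e^{i\Phi_s}\bigr].
\]
This uses only linearity of expectation. No independence between distinct squares is needed, which matters because of the additive flux relations noted after Proposition~\ref{prop:props}. The proof therefore reduces to a single-square computation: show that $\E[e^{i\Phi_s}]=|\chi(1)|^4$ for every ordered square $s$.

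For a fixed square $s=(v_1,u',v_2)$, the flux~\eqref{eq:Ws} is $\Phi_s=\varphi_{uv_1}-\varphi_{u'v_1}+\varphi_{u'v_2}-\varphi_{uv_2}$. The key observation is that these four phases sit on four distinct links. Because $v_1\neq v_2$ and $u'\neq u$, the edges $(u,v_1),(u',v_1),(u',v_2),(u,v_2)$ are pairwise different, and each link's phase is drawn independently. The expectation therefore factorises:
\[
\E[e^{i\Phi_s}]=\chi(1)\,\overline{\chi(1)}\,\chi(1)\,\overline{\chi(1)}=|\chi(1)|^4,
\]
using $\E[e^{-i\varphi}]=\overline{\chi(1)}$. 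This value is real and the same for every square, so averaging over the $m_u\ge1$ squares gives $\E[Q_u]=|\chi(1)|^4$. The result is independent of $u$, its degree and the topology, and does not require the phase law to be symmetric.

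For the uniform law, we compute $\chi(1)=\frac{1}{2\Delta}\int_{-\Delta}^{\Delta}e^{i\varphi}\,d\varphi=\sin\Delta/\Delta$, which is real, and this yields~\eqref{eq:sinc4}. If the topology is itself random, as in the ensemble of Section~\ref{sec:model}, we condition on the graph, noting that the phases are drawn independently of the rewiring, and then take the outer expectation. The conditional mean is already the constant $|\chi(1)|^4$, so the outer expectation leaves it unchanged.

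The only delicate step is justifying that the denominator is nonrandom and that the four edges of each square are distinct. The first holds because unit strengths make $|W_s|=1$. The second follows directly from the constraints $v_1\neq v_2$ and $u'\neq u$ in Definition~\ref{def:sets}. With general random strengths independent of the phases, the same argument would apply after conditioning on the moduli, since $Q_u$ is then a fixed convex combination of the cosines. The correlations among fluxes sharing edges affect only higher moments, such as the variance treated in Theorem~\ref{thm:resolution}, and not the mean.
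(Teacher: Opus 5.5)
Your proof is correct and is essentially the paper's own argument. Unit strengths make the denominator the deterministic count $m_u$, linearity of expectation removes any need for independence between squares, and the four distinct links of each square give $\E[e^{i\Phi_s}]=\chi(1)\overline{\chi(1)}\chi(1)\overline{\chi(1)}=|\chi(1)|^4$, with $\chi(1)=\sin\Delta/\Delta$ in the uniform case. Your closing remark about strengths drawn independently of the phases is also right: conditioning on the moduli leaves $Q_u$ a fixed convex combination of cosines, each of mean $|\chi(1)|^4$, so the mean stays exact. This is sharper than the second-order correction the paper's robustness appendix attributes to heterogeneous strengths.
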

 
The result follows because the four links of a square are always distinct, so that their phases are independent, and the alternating signs in~\eqref{eq:Ws} group them into two conjugate pairs. The expectation of each plaquette cosine is therefore identical, and averaging over the squares at a node requires only linearity, with no assumption that distinct squares are independent. No property of the topology enters, and the expression consequently holds at any degree, system size and rewiring probability. Expanding for small $\Delta$ gives
\begin{equation}
  \E[Q_u]=1-\tfrac{2}{3}\Delta^{2}+\tfrac{1}{5}\Delta^{4}+O(\Delta^{6}),
  \label{eq:expansion}
\end{equation}
so the response to weak phase disorder is quadratic. Local coherence is protected at first order, and a device with modest phase noise therefore loses little local structure, the degradation becoming rapid only as $\Delta$ approaches one radian. At complete disorder $\sinc\pi=0$ and the mean phase factor vanishes.
 
\begin{theorem}[Factorisation of the clustering coefficient]\label{thm:factor}
With unit edge weight, $C^{(0)}_u$ is a deterministic function of the connectivity and independent of the phases. Consequently, for every realisation of the topology,
\begin{equation}
  \E[C_u]=C^{(0)}_u\,|\chi(1)|^{4},
  \qquad
  \E\bigl[\bar C(p,\Delta)\bigr]
     =\E\bigl[\bar C^{(0)}(p)\bigr]\Bigl(\frac{\sin\Delta}{\Delta}\Bigr)^{4}.
  \label{eq:factorisation}
\end{equation}
\end{theorem}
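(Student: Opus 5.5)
The plan is to condition on the topology throughout. With all edge weights equal to one, every path and square amplitude has modulus one, so $|W_t|=1$ for $t\in T(u)$ and $|W_s|=1$ for $s\in S(u)$. Hence
\[
C^{(0)}_u=\frac{|S(u)|}{|T(u)|}=\frac{m_u}{|T(u)|},
\]
which is a function of the edge set alone and does not involve any phase. This gives the first assertion of Theorem~\ref{thm:factor} directly, and it matches the classical limit in Proposition~\ref{prop:props}\,(iii). The phases are drawn independently of the rewiring, so conditioning on a realisation of the connectivity does not change their law. Conditionally on the topology they are still i.i.d.\ with characteristic function $\chi$.

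Next, for a fixed topology, use the identity $C_u=C^{(0)}_uQ_u$ from Definition~\ref{def:factors}. Since $C^{(0)}_u$ is a constant under the phase expectation, $\E[C_u\mid\text{topology}]=C^{(0)}_u\,\E[Q_u\mid\text{topology}]$. Theorem~\ref{thm:meanphasefactor} applies to any bipartite network, so it gives $\E[Q_u\mid\text{topology}]=|\chi(1)|^4$ for every $u$ lying on at least one square. The nodes with $T(u)\neq\emptyset$ but $S(u)=\emptyset$ need a separate remark. For these $C_u=0=C^{(0)}_u$ identically, so the formula $\E[C_u]=C^{(0)}_u|\chi(1)|^4$ still holds trivially, even though $Q_u$ is undefined there. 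This establishes the node-level statement for every realisation of the topology.

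For the network average, $\bar C$ is a finite average of $C_u$ over the set of nodes with $T(u)\neq\emptyset$. That set depends only on the topology, so linearity of conditional expectation gives $\E[\bar C\mid\text{topology}]=\bar C^{(0)}\,|\chi(1)|^4$. I then take the expectation over the rewiring randomness at fixed $p$, using the tower property, and substitute $\chi(1)=\sin\Delta/\Delta$, which is real, so that $|\chi(1)|^4=(\sin\Delta/\Delta)^4$. This yields the second identity in \eqref{eq:factorisation}.

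There is no serious obstacle, because the content is carried by Theorem~\ref{thm:meanphasefactor}. The step that needs the most care is justifying that conditioning on the topology leaves the phase law unchanged, which holds because phases are assigned after rewiring and independently of it. The bookkeeping for nodes with open three-link paths but no squares must also be handled so that the average over $\bar C$ is taken over the correct, topology-determined index set.
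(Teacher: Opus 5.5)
Your proposal is correct and matches the paper's proof: with unit strengths $C^{(0)}_u=m_u/|T(u)|$ does not depend on the phases, so $\E[C_u]=C^{(0)}_u\E[Q_u]=C^{(0)}_u|\chi(1)|^4$ by Theorem~\ref{thm:meanphasefactor}. Averaging over nodes and then over topologies, which are generated independently of the phases, gives the global identity. Your explicit handling of nodes with $T(u)\neq\emptyset$ but no squares, where $Q_u$ is undefined yet $C_u=C^{(0)}_u=0$, is a small refinement that the paper leaves implicit.
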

 
Theorem~\ref{thm:factor} makes precise the sense in which topology and coherence act separately, as an identity in expectation rather than an approximate decoupling inferred from data. It has consequences in both directions. Prediction of the two-dimensional behaviour of the clustering coefficient requires only a one-dimensional measurement of the topological factor together with a closed-form curve. Conversely, an observed value of $\bar C$ cannot be attributed to connectivity in the absence of an independent estimate of the coherence, since a sparse coherent network and a dense incoherent one may return the same value.
 
The following result concerns the precision with which the phase factor can be measured.
 
\begin{theorem}[Resolution of the estimator]\label{thm:resolution}
Let $m_u$ denote the number of ordered squares at $u$, and $\tilde m_u=m_u/2$ the number of distinct
squares. Suppose every edge has strength one. Among the pairs of distinct squares at $u$, let $P^{(1)}_u$ be the number that share a
phase difference, namely two squares on the same partner $u'$ with one intermediary in common, or
on different partners with both intermediaries in common, and let $P^{(2)}_u$ be the number that
share exactly one link, necessarily of the form $(u,v)$. Then, for any bipartite network,
\begin{equation}
\Var(Q_u)=\frac{\tilde m_u\,\sigma^2+2P^{(1)}_u\kappa_1+2P^{(2)}_u\kappa_2}{\tilde m_u^{\,2}},
\label{eq:varQ}
\end{equation}

where
\begin{equation}
\begin{aligned}
\sigma^2&=\Var(\cos\Phi)=\tfrac12\bigl(1+|\chi(2)|^4\bigr)-r^4,\\
\kappa_1&=\tfrac12r^2\bigl(1+|\chi(2)|^2\bigr)-r^4,\\
\kappa_2&=\tfrac12r^2\operatorname{Re}\bigl[\chi(2)\,\overline{{\chi(1)}}^{\,2}\bigr]+\tfrac12r^3-r^4
\end{aligned}
\label{eq:kappas}
\end{equation}
are the variance of a plaquette cosine and its covariances for the two kinds of overlap
(Lemma~\ref{lem:cov} in Appendix). The fluxes of two squares sharing no link are independent, so such pairs do not contribute. All
covariances vanish if and only if $\chi(1)=0$, in which case~\eqref{eq:varQ} reduces to
$\Var(Q_u)=2\sigma^2/m_u$. For $\varphi\sim\mathcal{U}[-\Delta,\Delta]$ with $0<\Delta\le\pi$ this happens only
at $\Delta=\pi$, where $\mathrm{rms}(Q_u)=m_u^{-1/2}$ exactly.
\end{theorem}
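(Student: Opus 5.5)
With unit strengths every $|W_s|=1$, so $Q_u=\frac{1}{m_u}\sum_{s\in S(u)}\cos\Phi_s$. The two orderings of a geometric square have conjugate amplitudes and hence equal cosines. Summing over the $\tilde m_u$ distinct squares therefore gives $Q_u=\tilde m_u^{-1}\sum_{s}\cos\Phi_s$. Expanding the variance of this sum yields
\[
\Var(Q_u)=\tilde m_u^{-2}\Bigl(\sum_s\Var(\cos\Phi_s)+2\sum_{\{s,s'\}}\operatorname{Cov}(\cos\Phi_s,\cos\Phi_{s'})\Bigr),
\]
with the second sum over unordered pairs of distinct squares. The proof then has three steps. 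First, show that the variance term is the same constant $\sigma^2$ for every square. Second, classify the pairs by how the two squares overlap. Third, compute one covariance per overlap class, which is the content of Lemma~\ref{lem:cov}.

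\textbf{Variance of a single plaquette cosine.} Write $\cos\Phi=\tfrac12(e^{i\Phi}+e^{-i\Phi})$ and use the independence of the four distinct link phases of a square. As in Theorem~\ref{thm:meanphasefactor}, this gives $\E[e^{\pm i\Phi}]=|\chi(1)|^4=r^2$. It also gives $\E[e^{\pm 2i\Phi}]=|\chi(2)|^4$. Hence $\E\cos^2\Phi=\tfrac12(1+|\chi(2)|^4)$ and $\sigma^2=\tfrac12(1+|\chi(2)|^4)-r^4$.

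\textbf{Classification of overlaps.} Let $s=(v_1,u',v_2)$ and $s'=(v_1',u'',v_2')$ be two distinct squares at $u$; their links lie among the edges $(u,\cdot)$ and $(u',\cdot),(u'',\cdot)$.
\begin{itemize}
\item If they share no link, the fluxes depend on disjoint sets of independent phases, so the covariance is zero.
\item If $u'=u''$ and they share one intermediary $v_1$, they share two links, $(u,v_1)$ and $(u',v_1)$. Both fluxes then contain the common phase difference $\delta=\varphi_{uv_1}-\varphi_{u'v_1}$, plus independent differences.
\item If $u'\neq u''$ and $\{v_1,v_2\}=\{v_1',v_2'\}$, they share the two links $(u,v_1)$ and $(u,v_2)$. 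Both fluxes contain the same difference $\varphi_{uv_1}-\varphi_{uv_2}$, up to orientation.
\item If $u'\neq u''$ and they share exactly one intermediary, the only common link is of the form $(u,v)$, since distinct partners cannot share a $(u',\cdot)$ edge.
\end{itemize}
These cases give exactly the classes $P^{(1)}_u$ and $P^{(2)}_u$. Two squares with the same partner and both intermediaries equal coincide as geometric squares, so no other case arises.

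\textbf{Covariances, the main computational step.} I write $\Phi_s=X+Y$ and $\Phi_{s'}=\pm X+Y'$, where $X$ collects the shared phases and $X$, $Y$, $Y'$ are independent. Then
\[
\E[\cos\Phi_s\cos\Phi_{s'}]=\tfrac14\sum_{\epsilon,\epsilon'}\E[e^{i(\epsilon\Phi_s+\epsilon'\Phi_{s'})}],
\]
and each term factorises into characteristic functions.

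For class (1), $X$ is a difference of two independent phases and $Y$, $Y'$ are each also such differences. The terms in which $X$ cancels give $\E[e^{i(Y\mp Y')}]=r^2$. The terms in which $X$ doubles give $|\chi(2)|^2\,r^2$. This yields $\kappa_1=\tfrac12r^2(1+|\chi(2)|^2)-r^4$.

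For class (2), $X=\varphi_{uv}$ is a single phase, and the remaining three phases of each square are independent. The terms in which $X$ cancels give $\tfrac12r^3$ after accounting for the residual phases. The terms in which $X$ doubles produce $\chi(2)\,\overline{\chi(1)}^{\,2}$ and its conjugate, together with a factor $r^2$ from the remaining independent differences, contributing $\tfrac12r^2\Rey[\chi(2)\overline{\chi(1)}^2]$. Subtracting $\E\cos\Phi_s\,\E\cos\Phi_{s'}=r^4$ gives $\kappa_2$.

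The hardest part is keeping the orientation signs straight in this case. I would fix signs directly from \eqref{eq:Ws} and check each result at $\Delta=0$, where every covariance must vanish.

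\textbf{Vanishing criterion.} If $\chi(1)=0$ then $r=0$, and every $\kappa$ vanishes. The variance then reduces to $\sigma^2/\tilde m_u=2\sigma^2/m_u$.

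Conversely, if $r>0$ then $\kappa_1>0$. To see this, note that $\kappa_1=r^2\bigl(\tfrac12(1+|\chi(2)|^2)-r^2\bigr)$. The inequality $r^2=|\chi(1)|^4\le\tfrac12(1+|\chi(2)|^2)$ follows from Jensen applied to $\cos^2$: indeed $\Rey$-parts give $|\chi(1)|^2\le\E\cos^2(\varphi-c)=\tfrac12(1+\Rey(e^{-2ic}\chi(2)))$ for a suitable $c$. Equality would force the phase to be degenerate, which I would rule out using $|\chi(2)|<1$.

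For the uniform distribution, $\sinc\Delta=0$ on $(0,\pi]$ only at $\Delta=\pi$. There $\chi(2)=0$ as well, so $\sigma^2=\tfrac12$ and $\Var(Q_u)=1/m_u$. Since $\E Q_u=0$ at $\Delta=\pi$, this gives $\mathrm{rms}(Q_u)=m_u^{-1/2}$.
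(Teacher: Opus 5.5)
Your proposal is correct and follows the paper's proof essentially step for step. The reduction to the $\tilde m_u$ distinct squares, the variance expansion, the three-way overlap classification of Lemma~\ref{lem:cov}, and the characteristic-function evaluation of $\sigma^2$, $\kappa_1$ and $\kappa_2$ all match the appendix. The one deviation is the strict positivity of $\kappa_1$ for $r>0$. The paper obtains it from $\kappa_1=r^2\Var\bigl(\cos(\varphi-\varphi')\bigr)$ and a support argument. Your Jensen step applied to $\varphi$ itself only gives $|\chi(1)|^4\le\tfrac14\bigl(1+|\chi(2)|\bigr)^2$, so you need the extra inequality $\tfrac14(1+x)^2\le\tfrac12(1+x^2)$, with equality only at $x=1$, to reach the bound you state. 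Equality in the full chain then forces a point mass, which is exactly the case (e.g.\ $\Delta=0$) where the stated ``if and only if'' fails and must be excluded.
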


For symmetric phase distributions, $\chi$ is real, and the two covariances take the transparent
forms $\kappa_1=r^2\Var\bigl(\cos(\varphi-\varphi')\bigr)$ and $\kappa_2=r^3\Var(\cos\varphi)$, both
non-negative, so that every overlap between squares increases the variance. The counts
in~\eqref{eq:varQ} are purely combinatorial and are obtained from the same sparse products as
Proposition~\ref{prop:matrix}, at the same $O(nk^2)$ cost (Appendix~\ref{app:computation}). For the
lattice they are available in closed form.
 
\begin{corollary}[Resolution on the regular lattice]\label{cor:lattice}
On the ring lattice of Section~\ref{sec:model} with $p=0$, degree $k\ge3$ and
$n_U=n_V\ge2k$, every node has $\tilde m=k(k-1)(k-2)/3$,
$P^{(1)}=\tfrac38k(k-1)(k-2)(k-3)$ and $P^{(2)}=\tfrac1{120}k(k-1)(k-2)(27k^2-139k+194)$, so that
\begin{equation}
\Var(Q_u)=\frac{3\bigl[20\sigma^2+45(k-3)\kappa_1+(27k^2-139k+194)\kappa_2\bigr]}{20\,k(k-1)(k-2)}.
\label{eq:varlattice}
\end{equation}
For fixed $\Delta<\pi$, $\Var(Q_u)=81\kappa_2/(20k)+O(k^{-2})$ as $k\to\infty$, whereas at
$\Delta=\pi$, $\Var(Q_u)=1/m_u=3/[2k(k-1)(k-2)]$.
\end{corollary}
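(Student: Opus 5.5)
The plan is to specialise Theorem~\ref{thm:resolution} to the lattice. Its variance formula~\eqref{eq:varQ} holds for any bipartite network with unit strengths, so all that remains is to evaluate the three purely combinatorial quantities $\tilde m_u$, $P^{(1)}_u$ and $P^{(2)}_u$ on the banded ring, substitute them, and read off the two limits.

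\textbf{Overlap structure.} I would first record that on the ring every $u$ has neighbourhood $N(u)=\{u,\dots,u+k-1\}$. For a node $u'=u+d$ the overlap is
\[
N(u)\cap N(u')=\{u+\max(0,d),\dots,u+k-1+\min(0,d)\},
\]
an interval of length $q_{uu'}=k-|d|$ when $|d|\le k-1$ and empty otherwise. The hypothesis $n_U=n_V\ge2k$ guarantees that no wrap-around produces extra overlaps. Node-transitivity of the lattice makes all counts independent of $u$.

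\textbf{Counting $\tilde m_u$ and $P^{(1)}_u$.} Distinct squares at $u$ correspond to a partner $u'$ together with an unordered pair of shared neighbours. Hence
\[
\tilde m_u=\sum_{0<|d|<k}\binom{k-|d|}{2}=2\sum_{q=1}^{k-1}\binom q2=2\binom k3,
\]
which is the stated value. For $P^{(1)}_u$ there are two contributions.
\begin{itemize}
\item Same partner, one intermediary in common: each partner with overlap $q$ contributes $q\binom{q-1}{2}$ such pairs, and summing over $d$ gives $6\binom k4$.
\item Different partners, both intermediaries in common: for each pair $v_1<v_2$ in $N(u)$, the partners adjacent to both form an interval of nodes of size $k-(v_2-v_1)$, one of which is $u$. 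This gives $\binom{k-1-g}{2}$ pairs of partners when the gap is $g=v_2-v_1$, and there are $k-g$ such pairs $\{v_1,v_2\}$. Summing $(k-g)\binom{k-1-g}{2}$ over $g$ is again a binomial-sum identity.
\end{itemize}
Adding the two should give $\tfrac38k(k-1)(k-2)(k-3)=9\binom k4$.

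\textbf{Counting $P^{(2)}_u$ (the main obstacle).} I would identify the pairs counted by $P^{(2)}_u$ as pairs of squares with different partners sharing exactly one intermediary $v$. The reason is that the other possible overlaps share two links and are already in $P^{(1)}_u$: same partner with a common $v$ shares $(u,v)$ and $(u',v)$, while both intermediaries common shares $(u,v_1)$ and $(u,v_2)$.

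I would count these pairs by fixing $v\in N(u)$. For each partner $u'\ni v$, $u'\neq u$, the squares through $v$ have their other intermediary in $N(u)\cap N(u')\setminus\{v\}$. I would then count ordered choices of two different partners with distinct second intermediaries, subtract the configurations in which the second intermediaries coincide, and halve. This is a nested sum over the position of $v$ within the band and over the offsets of $u'$ and $u''$, and the boundary truncations make the inner counts piecewise. I expect this step to be the main obstacle. I would evaluate it by splitting into the cases $d,d''$ of equal and opposite signs, summing the resulting polynomials, and checking the outcome against brute-force enumeration at $k=3,4,5$, where the target formula gives $P^{(2)}=1,\dots$.

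\textbf{Substitution and limits.} Inserting the three counts into~\eqref{eq:varQ} and clearing the factor $\tilde m_u^2=k^2(k-1)^2(k-2)^2/9$ yields~\eqref{eq:varlattice}.

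For fixed $\Delta<\pi$, the dominant term as $k\to\infty$ is
\[
\frac{2P^{(2)}\kappa_2}{\tilde m^2}\sim\frac{2\cdot\frac{27}{120}k^5\cdot 9}{k^6}\,\kappa_2=\frac{81\kappa_2}{20k}.
\]
The other terms are $O(k^{-2})$, since $\tilde m\sigma^2/\tilde m^2$ and $P^{(1)}\kappa_1/\tilde m^2$ are of order $k^{-3}$ and $k^{-2}$ respectively.

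At $\Delta=\pi$ we have $\chi(1)=\chi(2)=0$. All covariances then vanish and $\sigma^2=\tfrac12$, so Theorem~\ref{thm:resolution} gives $\Var(Q_u)=2\sigma^2/m_u=1/m_u$ with $m_u=2\tilde m_u$.
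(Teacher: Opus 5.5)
Your route is the paper's. You specialise \eqref{eq:varQ}, use $q_{uu'}=k-|d|$ to get $\tilde m=2\binom k3$, and obtain $P^{(1)}=P_A+P_B$ from the same two sums the paper uses, namely $P_A=\sum_q q(q-1)(q-2)=6\binom k4$ and $P_B=\sum_h h\binom{h-1}{2}=3\binom k4$. You then read off both limits. All of this is correct.

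The step you leave as a plan, $P^{(2)}$, is the only count that is not routine. As written, the polynomial $27k^2-139k+194$ is therefore asserted rather than derived. A brute-force check at $k=3,4,5$ is a sanity test, not a derivation.

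Your per-link count is sound. Read it as: all ordered pairs of distinct partners through $v$ together with their second intermediaries, minus those whose second intermediaries coincide, halved. Summed over $v$, it reproduces the identity in the proof of Theorem~\ref{thm:resolution}, $P^{(2)}_u=\sum_{v}\binom{n_{uv}}{2}-P_A-2P_B$. The same-partner pairs through a common $v$ total $P_A$, and the pairs with coincident second intermediaries are counted once at each of their two shared links, totalling $2P_B$.

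The paper works with this identity directly, which removes the piecewise double sum over pairs of partners that you flag as the obstacle. On the lattice, the link at offset $j\in\{0,\dots,k-1\}$ lies on $n_j=\sum(k-1-|d|)$ squares, the sum running over $d\neq0$ with $j-k+1\le d\le j$. This equals $\binom{k-1}{2}+j(k-1-j)$, which is the paper's $(k-1)^2-\tfrac12[j(j+1)+(k-1-j)(k-j)]$.

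Then $\sum_j\binom{n_j}{2}$ is a single sum of a quartic in $j$. Write $K=k-1$ and use $\sum_j j(K-j)=K(K^2-1)/6$ and $\sum_j j^2(K-j)^2=K(K^4-1)/30$. The sum equals $K(K^2-1)(27K^2-25K-38)/120$. Subtracting $2P_A=K(K^2-1)(K-2)/2$ leaves $K(K^2-1)(27K^2-85K+82)/120$, which is exactly $\tfrac1{120}k(k-1)(k-2)(27k^2-139k+194)$. With that in place, your substitution and asymptotics go through as written.
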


Theorem~\ref{thm:resolution} accounts for the dependence between squares. Distinct squares do not carry independent fluxes, by the additive relation noted after Proposition~\ref{prop:range}, but a variance is assembled from pairwise covariances alone, as given in Lemma~\ref{lem:cov}. The result has the following interesting features.
 
The first concerns the case of complete disorder. At $\Delta=\pi$ the link phase that two overlapping squares do not share is uniform on the circle and independent of everything else, and
convolution with the uniform distribution returns the uniform distribution. The two fluxes are then
independent and sharing a link has no statistical effect (Lemma~\ref{lem:cov}). The fluxes at a node are thus pairwise but not mutually independent, and since a variance registers only pairwise dependence, the estimate $2 \sigma^2/m_u$ becomes exact. This is also the regime relevant to detection.
Under the null hypothesis of no coherence the standard deviation of $Q_u$ is exactly $m_u^{-1/2}$,
so that a node lying on $m_u$ squares cannot distinguish genuine coherence from sampling noise below
$m_u^{-1/2}$, and since rewiring destroys squares this threshold increases with $p$.
 
The second concerns the precision with which a non-zero coherence is measured, which is governed by
a different mechanism. For $\Delta<\pi$ the covariances are positive and the dominant contribution
is $P^{(2)}_u$. Every square through a link $(u,v)$ responds to its phase, and since each of the $k$
links at a node is shared by $O(k^2)$ of its $O(k^3)$ squares, the fluctuation of $Q_u$ is set by the
$k$ link phases at $u$ rather than by its $m_u$ squares. Corollary~\ref{cor:lattice} makes this
precise. The standard deviation decays as $k^{-1/2}$, that is as $m_u^{-1/6}$, rather than as
$m_u^{-1/2}$. More generally, to leading order when each link carries many squares,
$\Var(Q_u)\simeq4\kappa_2\sum_vn_{uv}^2/(\sum_vn_{uv})^2$, where $n_{uv}$ is the number of squares
through the link $(u,v)$, an inverse effective number of links. The correction is not a small one. At the degree $k=6$ used in Section~\ref{sec:numerics}, the estimate $2\sigma^2/m_u$ understates the standard deviation of $Q_u$ by a factor $1.67$ at $\Delta=\pi/4$ and $1.42$ at $\Delta=\pi/2$, and the factor grows without bound with $k$. Figure~\ref{fig:resolution} confirms~\eqref{eq:varQ} across the whole range of $\Delta$, on the lattice and node by node on rewired networks.
 
The third concerns the retention of the sign in Definition~\ref{def:C}. An unsigned average such as $|Q_u|$ returns a positive value in a completely incoherent network, and that value grows as the number of squares falls, which may be read as a recovery of coherence when it actually reflects only a smaller sample of fluxes. The signed coefficient avoids this, since its positive and negative contributions cancel and it has mean zero in an incoherent network. For sparse networks, in which $m_u$ is small, the same result determines the number of realisations needed to resolve a given level of coherence, provided~\eqref{eq:varQ} rather than $2\sigma^2/m_u$ is used away from $\Delta=\pi$.
 
The final result extends the argument from closed motifs to open ones.
 
\begin{theorem}[Open-path visibility]\label{thm:I2}
If $u$ and $u'$ share exactly $q\geq1$ neighbours and edge weights are unity, then
with $r=|\chi(1)|^{2}$,
\begin{equation}
  \E\bigl[(I^{(2)}_{uu'})^{2}\bigr]=r^{2}+\frac{1-r^{2}}{q},
  \qquad
  r^{2}=\Bigl(\frac{\sin\Delta}{\Delta}\Bigr)^{4}.
  \label{eq:I2law}
\end{equation}
\end{theorem}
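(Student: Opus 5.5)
With unit weights, $N_{uu'}=q$, so $(I^{(2)}_{uu'})^2=|M_{uu'}|^2/q^2$. The plan is to expand $|M_{uu'}|^2$ as a double sum over the $q$ shared neighbours and take expectations term by term, using only linearity and the independence of distinct link phases. With the Hermitian convention, $M_{uu'}=\sum_{v}e^{i(\varphi_{uv}-\varphi_{u'v})}$, the sum running over the shared neighbours $v$. Writing $z_v=e^{i(\varphi_{uv}-\varphi_{u'v})}$, we have
\[
|M_{uu'}|^2=\sum_{v}|z_v|^2+\sum_{v\neq v'}z_v\overline{z_{v'}}=q+\sum_{v\neq v'}e^{i(\varphi_{uv}-\varphi_{u'v}-\varphi_{uv'}+\varphi_{u'v'})}.
\]

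For each ordered pair $v\neq v'$ the exponent is exactly the flux $\Phi_s$ of the square $(v,u',v')$ from \eqref{eq:Ws}, up to orientation. It involves four distinct links $(u,v),(u',v),(u,v'),(u',v')$, distinct because $u\neq u'$ and $v\neq v'$. Their phases are therefore independent, and
\[
\E\bigl[e^{i\Phi}\bigr]=\chi(1)\,\overline{\chi(1)}\,\overline{\chi(1)}\,\chi(1)=|\chi(1)|^4=r^2,
\]
the same computation as in Theorem~\ref{thm:meanphasefactor}. No independence between different pairs $(v,v')$ is needed, because only linearity of expectation is used. There are $q(q-1)$ ordered pairs, so
\[
\E|M_{uu'}|^2=q+q(q-1)r^2.
\]
Dividing by $q^2$ gives $r^2+(1-r^2)/q$. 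For the uniform law, $\chi(1)=\sin\Delta/\Delta$ is real, so $r^2=(\sin\Delta/\Delta)^4$.

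The argument has no real obstacle. The only points needing care are sign and orientation bookkeeping: one must confirm that each cross term carries each of the four link phases exactly once, two with each sign, so that the expectation is $|\chi(1)|^4$ rather than something involving $\chi(2)$. One must also note that $q=1$ is consistent, since the cross sum is then empty and $I^{(2)}=1$, matching $r^2+(1-r^2)=1$.
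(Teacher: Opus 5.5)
Your proof is correct and matches the paper's argument. You expand $|M_{uu'}|^2$ into $q$ diagonal terms and $q(q-1)$ cross terms, each of expectation $|\chi(1)|^4=r^2$ because the four link phases are distinct and independent, then divide by $N_{uu'}^2=q^2$; the paper does the same cross-term computation through the independent differences $\psi_j=\varphi_{uv_j}-\varphi_{u'v_j}$ with $\E[e^{i\psi}]=r$.
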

 
The expression interpolates between perfect visibility at full coherence, for any number of shared neighbours, and a limiting value of $q^{-1/2}$ at complete disorder. This value is the shot noise arising from the addition of $q$ random phasors, so that the residual visibility observed at large $\Delta$ reflects the finite number of alternative routes and not any surviving structure. Combined with Theorem~\ref{thm:resolution}, it establishes that at complete disorder the closed and open measures are limited by the same form of combinatorial sampling noise, and that in both cases the limit is set by the number of alternative routes rather than by the system size. Away from complete disorder the two measures behave differently. The visibility of a pair involves $2q$ distinct links and no shared ones, so that Equation~\eqref{eq:I2law} holds without correction, whereas the squares at a node overlap on its links.
 
The topological factor $\bar C^{(0)}(p)$ admits no simple closed form, and global structure permits no comparable factorisation. These questions are addressed numerically in the following section.

\section{Numerical Results}\label{sec:numerics}
The simulations help determine the topological factor, for which no closed form is available, and confirm that combining it with the exact phase factor reproduces the clustering coefficient across the parameter plane. We then compute the coherent distance of Section~\ref{sec:openpath}, since the factorisation does not apply to global structure and the distance cannot be derived in closed form. Finally, we test the exact results outside the idealisations under which they were proved, in particular for heterogeneous strengths.

Unless stated otherwise we take $n_U=n_V=200$ nodes in each class, degree $k=6$, length penalty $\alpha=1/(2k)$, twenty-one values of the rewiring probability $p$ spaced logarithmically in $[10^{-3},1]$, and phase disorder $\Delta\in\{0,\pi/4,\pi/2,3\pi/4,\pi\}$, averaging over forty independent realisations at each parameter pair. Error bars denote standard errors of the mean and are generally smaller than the plotting symbols. Validation, robustness checks and computational cost are reported in Appendix~\ref{app:computation}.
 
\subsection{Topology and Coherence in the Clustering Coefficient}
 
\begin{figure}[t]
\centering
\includegraphics[width=\textwidth]{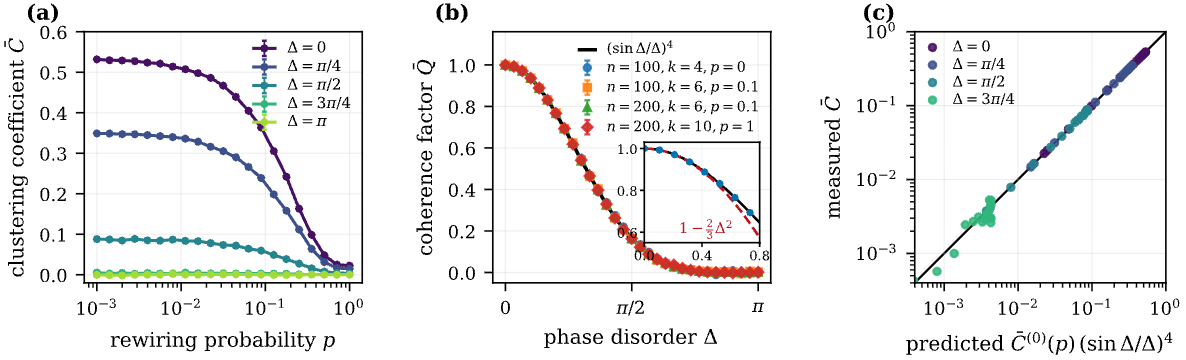}
\caption{The clustering coefficient is the product of a topological factor and a phase factor. Here $p$ is the probability that a link is rewired, $\Delta$ the half width of the distribution from which link phases are drawn, $\bar C$ the average clustering coefficient of Equation~\eqref{eq:C}, $\bar C^{(0)}$
its topological factor and $\bar Q$ its phase factor. \textbf{(a)} Clustering is suppressed by two independent mechanisms. Rewiring destroys squares, driving the coherent curve from $0.53$ to $0.02$, while phase disorder lowers every curve without altering its shape. At $\Delta=\pi/2$ the coefficient is six times smaller than at $\Delta=0$ although the two ensembles have statistically identical connectivity, so that a low value of $\bar C$ alone does not indicate a sparse or randomised topology.
\textbf{(b)} The phase factor measured for four combinations of system size, degree and rewiring probability. The four data sets are indistinguishable and follow the prediction $(\sin\Delta/\Delta)^4$ of Theorem~\ref{thm:meanphasefactor}, which contains no fitted parameter, confirming that the phase factor is independent of the topology on which it is evaluated. Inset: at small $\Delta$ the curve is flat at first order, with quadratic onset $1-\frac{2}{3}\Delta^2$ shown dashed. 
\textbf{(c)} Test of Theorem~\ref{thm:factor}. The measured $\bar C$ is plotted against the prediction $\bar C^{(0)}(p)(\sin\Delta/\Delta)^4$ for all $21\times4$ parameter pairs, together with the identity line. The case $\Delta=\pi$ is omitted, the prediction then being identically zero.}
\label{fig:clustering}
\end{figure}
 
Figure~\ref{fig:clustering} presents the clustering coefficient across the parameter plane together with its two factors. Panel (a) gives the phenomenology. Clustering decreases with rewiring in the manner familiar from the Watts--Strogatz model, and phase disorder lowers the entire family of curves without altering their shape. The suppression is substantial well before the phases become fully random: at $\Delta=\pi/2$ the phase factor is already $0.164$, so that more than four fifths of the local closure supported by the
connectivity has been removed by interference alone.
 
Panels (b) and (c) establish that the two effects separate exactly. The measured phase factor collapses onto $(\sin\Delta/\Delta)^4$ for system sizes of $100$ and $200$ nodes per class, degrees $4$, $6$ and $10$, and rewiring probabilities between zero and one, as required by Theorem~\ref{thm:meanphasefactor}. Plotting the measured clustering coefficient against the product of the measured topological factor and the predicted phase factor places every parameter pair on the identity line across three orders of magnitude, in accordance with Theorem~\ref{thm:factor}. Only the topological factor required simulation, the phase dependence at every value of $p$ following without further computation.
 
The inset of panel (b) carries a practical implication. Since the phase factor is flat at the origin, a device with phase noise of $\Delta\lesssim0.3$ retains more than ninety three per cent of its coherent clustering. Local interference structure is thus robust to weak disorder and degrades rapidly thereafter.
 
\subsection{Coherence and Global Structure}
 
\begin{figure}[t]
\centering
\includegraphics[width=\textwidth]{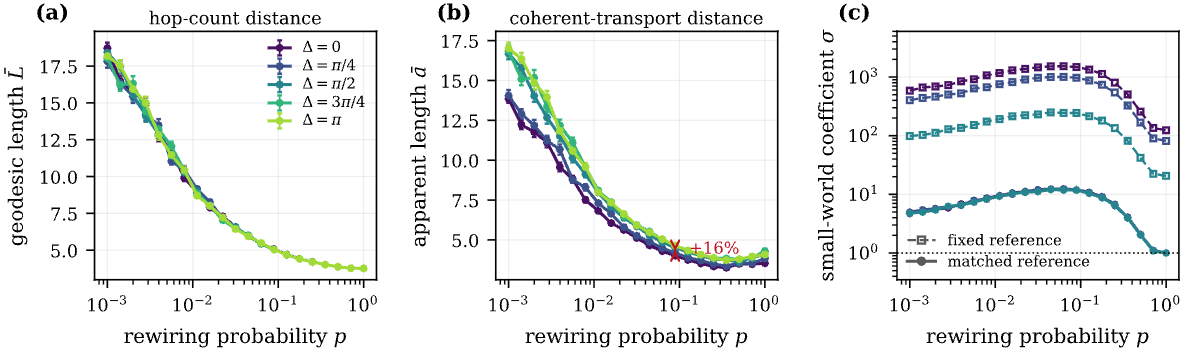}
\caption{Phase disorder leaves the number of hops between nodes unchanged while increasing the distance experienced by a coherent signal. Here $\bar L$ is the mean hop-count distance, $\bar a$ the mean apparent distance of Equation~\eqref{eq:apparent} obtained from the propagator $(\mathbb I-\alpha A)^{-1}$, and $\sigma$ the small-world coefficient of Humphries and Gurney~\cite{humphries2008}.
\textbf{(a)} Hop-count distance. All five levels of phase disorder collapse onto a single curve, this distance being determined by which links exist. The small-world decrease occurs near $p\approx10^{-2}$.
\textbf{(b)} Apparent distance, which sums over all walks rather than the shortest. The decrease with rewiring persists, but the curves now separate with phase disorder. At $p\approx0.09$, marked by the arrow, cancellation between competing walks increases the effective distance by sixteen per cent, from $4.01$ to $4.64$, while the hop count on the same networks changes by less than one per cent. Coherence therefore acts on global structure, but only on a distance measure that accounts for the interference of competing walks.
\textbf{(c)} The small-world coefficient for $\Delta=0$, $\pi/4$ and $\pi/2$, with colours as in panel (a). Solid curves employ a reference network carrying the same phase disorder as the network under study, and the three coincide because the phase factor cancels between the two, as predicted by Equation~\eqref{eq:sigma-cancel}. Dashed curves use one reference network, generated at $\Delta=\pi$ and reused at every value of $\Delta$. Larger $\Delta$ is omitted, as both ratios approach the indeterminate form $0/0$.} \label{fig:transport}
\end{figure}
 
The Watts--Strogatz description has two components: local structure and distance. Figure~\ref{fig:transport}(a) shows that the mean hop-count distance decreases from $18.7$ to $3.8$ under rewiring but is unaffected by phase disorder, as a hop count depends only on which links exist. Coherence therefore leaves hop-count distance unchanged, and its effect on global structure must be reflected in another notion of distance.
 
Figure~\ref{fig:transport}(b) shows where this effect appears. The apparent distance retains the decrease with rewiring but increases with phase disorder. On networks of fixed connectivity at $p\approx0.09$, the effective distance increases by $15.7$ per cent between full coherence and complete disorder, while the hop count on the same networks changes by less than one per cent. Flux disorder therefore does not disconnect the network; instead, it introduces cancellation between alternative routes. The network remains equally navigable in a combinatorial sense, while becoming measurably harder to traverse coherently. This provides a static structural counterpart to the suppression of transfer by destructive interference familiar from coherent transport~\cite{mulken2011,mohseni2008}. 
 
\subsection{The Small-World Coefficient and its Reference Network}\label{sec:sw}
 
These two quantities are conventionally combined into the small-world coefficient of Humphries and Gurney~\cite{humphries2008},
\[
  \sigma=\frac{\bar C/\bar C_{\mathrm{rand}}}{\bar L/\bar L_{\mathrm{rand}}},
\]
in which $\bar C_{\mathrm{rand}}$ and $\bar L_{\mathrm{rand}}$ denote the clustering and the distance of a randomised reference network. Under complex weights the choice of reference proves consequential in a manner with no real-weighted analogue, and here the analysis of Section~\ref{sec:theory} highlights this distinction.
 
Let the reference be a randomised topology carrying the same phase disorder, which is the faithful generalisation of the Watts--Strogatz procedure since only the connectivity is randomised. By Theorem~\ref{thm:meanphasefactor} the phase factor is independent of topology, so that it appears identically in numerator and denominator and cancels,
\begin{equation}
  \sigma(p,\Delta)
  =\frac{\bar C^{(0)}(p)\,\bar Q(\Delta)}{\bar C^{(0)}(1)\,\bar Q(\Delta)}
   \Big/\frac{\bar L(p)}{\bar L(1)}
  =\frac{\bar C^{(0)}(p)/\bar C^{(0)}(1)}{\bar L(p)/\bar L(1)}.
  \label{eq:sigma-cancel}
\end{equation}
The small-world coefficient of this ensemble is therefore exactly independent of phase disorder and equal to its classical real-weighted value. The solid curves of Figure~\ref{fig:transport}(c) confirm this, with peak values of $12.37$, $12.37$ and $12.08$ for $\Delta=0$, $\pi/4$ and $\pi/2$, agreeing to within statistical error.
 
If instead a single reference network is generated with random phases and reused at every value of $\Delta$, the cancellation is lost and the reference clustering tends to zero in expectation. In the present ensemble $\bar C(p{=}1,\Delta{=}\pi)=1.8\times10^{-4}$, and $\sigma$ inflates to $1.5\times10^{3}$ at full coherence, decreasing as the phase disorder increases. The small-world coefficient \(\sigma\) is known to be unduly sensitive to small reference clustering in real-weighted networks~\cite{telesford2011}; complex weights make this problem qualitatively more severe, since the denominator vanishes in expectation rather than merely becoming small.
 
Any phase dependence of $\sigma$ in a complex-weighted network should first be verified against a matched reference, since a mismatched reference can produce such dependence artificially. Furthermore, small-worldness as measured through clustering does not register coherence in this model, because clustering reflects connectivity rather than the phase structure of the weights. The actual phase dependence of global structure is instead captured by the quantity shown in Figure~\ref{fig:transport}(b).
 
\subsection{Interference between Open Paths}
 
\begin{figure}[t]
\centering
\includegraphics[width=\textwidth]{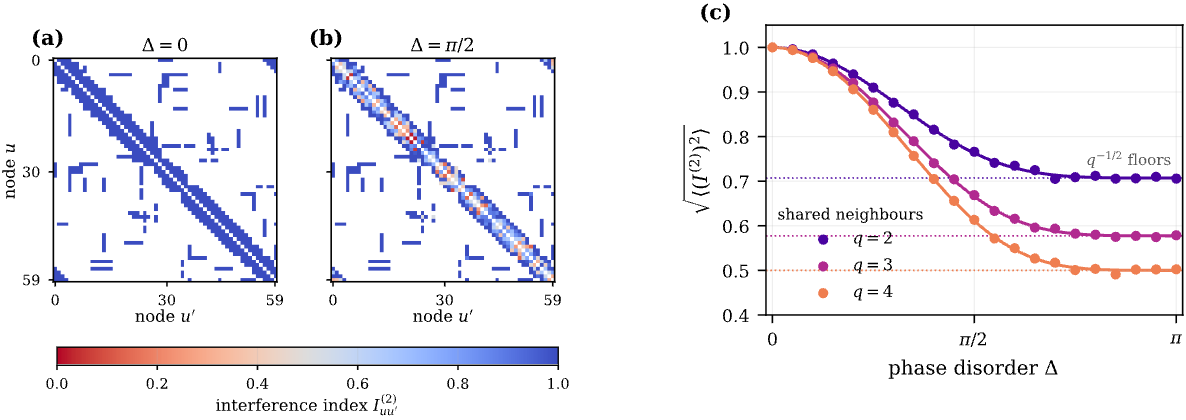}
\caption{Topological disorder determines which pairs of nodes interfere, and phase disorder determines how strongly pairs interfere. The quantity plotted is the open-path visibility $I^{(2)}_{uu'}$ of Equation~\eqref{eq:I2}, the modulus of the summed amplitude of all two-step paths between two nodes of $\Uset$ divided by the sum of the individual moduli, and $q$ is the number of neighbours the two nodes share. \textbf{(a,b)} The visibility for every pair of nodes in a network with $60$ nodes per class, degree $5$ and rewiring probability $0.1$. Blue indicates reinforcement and red indicates cancellation. Pairs sharing no neighbour, for which the visibility is undefined, are left blank. The connectivity is identical in the two panels and only the phase disorder differs. At $\Delta=0$ every pair sharing a neighbour has visibility one, producing a uniform band along the diagonal inherited from the ring lattice together with isolated off-band pairs created by rewiring. At $\Delta=\pi/2$ the same pairs are present but their visibility is distributed across the full range. \textbf{(c)} Root-mean-square visibility against phase disorder, resolved by the number of shared neighbours. Symbols denote simulations and solid lines the prediction of Theorem~\ref{thm:I2}, which contains no fitted parameter. Dotted lines mark the asymptotic values $q^{-1/2}$ approached by the curves. Under complete decoherence, the residual signal arises in the sum of $q$ random phasors, so pairs joined by more alternative routes lose visibility faster and approach lower limiting values.}
\label{fig:interference}
\end{figure}
 
The clustering coefficient averages interference over the squares incident on a node, and Theorem~\ref{thm:I2} gives the corresponding average for pairs of nodes. Neither quantity reveals how interference is distributed across the network; this spatial structure is instead made apparent in Figure~\ref{fig:interference}(a,b). Fixing the connectivity and varying only the phase disorder leaves the set of interfering pairs unchanged and alters only the strength of the interference between each pair. The banded structure inherited from the ring lattice and the isolated pairs created by rewiring are identical in both panels. Changing the connectivity therefore changes which pairs can interfere, while changing the phase disorder alters the strength of the interference between them.
 
Panel (c) quantifies this distinction and confirms Theorem~\ref{thm:I2} without any fitted parameter, across the full range of phase disorder and for each number of shared neighbours. The approach to the limiting behaviour $q^{-1/2}$ is worth examining. Pairs joined by more alternative routes lose visibility faster and terminate lower. Thus redundancy of routes, which can only help in a real-weighted network, may become problematic once interference allows routes to cancel. This is the structural counterpart of the observation in coherent transport that additional pathways do not necessarily improve transfer~\cite{mohseni2008}. Redundancy has the opposite effect on the closed measure, since the same non-negativity of the visibility that gives rise to these limiting values also bounds the clustering coefficient from below through Proposition~\ref{prop:range}. Many alternative routes therefore reduce the visibility of a pair while constraining how negative the coefficient at a node can become.
 
Bringing these results together shows that the topological factor $\bar C^{(0)}$ depends on $p$ alone, decreasing from $0.53$ to $0.02$ across the rewiring range; the phase factor $\bar Q$ depends on $\Delta$ alone and follows $(\sin\Delta/\Delta)^4$; their product is the clustering coefficient. The hop-count distance $\bar L$ depends on $p$ alone by construction, decreasing from $18.7$ to $3.8$, whereas the coherent distance $\bar a$ responds to both, increasing by $15.7$ per cent across the range of phase disorder. The small-world coefficient is exactly independent of phase against a matched reference and spans two orders of magnitude against a fixed one, and the open-path visibility follows $\sqrt{r^2+(1-r^2)/q}$ with $r=(\sin\Delta/\Delta)^2$.

\section{Discussion}\label{sec:conclusion}

We have introduced a clustering coefficient for networks that are both bipartite and complex weighted, a combination that arises in quantum and photonic architectures but is not covered by existing measures. Bipartite clustering coefficients are defined on four-node squares under the assumption of real weights, whereas interferometric measures that account for phases are built on triangles, which cannot occur in a bipartite network.

We find that the form of the coefficient is constrained by certain requirements. First, the phase assigned to an individual link is a matter of convention, so any structural measure of phase must be invariant under a redefinition of the phase reference at each node. In a bipartite network, the smallest quantity with this invariance is the phase accumulated around a four-node square. The square is therefore the elementary carrier of gauge-invariant phase information, rather than merely the shortest possible cycle. The second requirement concerns the, interference arising from the sum of amplitudes: the modulus of an individual square amplitude is a product of link strengths and contains no phase information. A phase-sensitive coefficient must therefore combine amplitudes before taking their modulus. The resulting coefficient reduces to the classical bipartite coefficient when the phases vanish, lies in $[-1,1]$, and becomes negative when alternative routes to a node interfere destructively. The coherent sum over all squares incident to a node can be computed from a single sparse matrix product, with a cost that is linear in the number of nodes.

The coefficient further separates into a factor determined by connectivity and a factor determined by the phases, making the phase dependence analytically tractable. Applied to a bipartite Watts--Strogatz ensemble, in which links are rewired with probability $p$ and phases are drawn uniformly from $[-\Delta,\Delta]$, the mean phase factor is $(\sin\Delta/\Delta)^4$, independently of node degree and topology. We obtain three further results based on this factorisation. The clustering coefficient is the product of the topological factor and this phase-dependent curve; the visibility of an open path between two nodes sharing $q$ neighbours decreases from unity to the limiting value $q^{-1/2}$; and the variance of the phase factor equals $1/m_u$, where $m_u$ is the number of squares at a node, only at complete phase disorder. Away from this limit, the variance is dominated by pairs of squares that share a link, while the standard deviation decays as $k^{-1/2}$ with degree $k$.

We determine the topological factor and the global structure numerically, since neither admits
a closed form, completing the description of the clustering coefficient across the parameter plane. The topological factor falls from $0.53$ to $0.02$ across the rewiring range, while the mean hop-count distance falls from $18.7$ to $3.8$. Phase disorder leaves the hop-count distance unchanged, since it depends only on the pattern of links, but increases the apparent distance by sixteen per cent on networks with identical connectivity. Flux disorder therefore does not disconnect the network; it introduces destructive interference between alternative routes, leaving the combinatorial structure unchanged while altering coherent transport through it.

These results extend beyond the specific measures and have broader consequences of incorporating phase into network structure. Invariance under local redefinition of phase references is necessary for a structural measure of phase, since phase origins may be fixed by calibration rather than by the underlying hardware. Likewise, combinatorial and coherent distance capture different aspects of connectivity: the coupling map of a photonic mesh may remain unchanged while the effective propagation between nodes becomes less coherent. A corresponding distinction arises at the global level. The phase factor cancels from the small-world coefficient when a network is compared with a matched reference, so any residual phase dependence reflects the choice of reference, not an intrinsic phase dependence of the network. As the reference clustering has mean zero at full phase disorder, this normalisation can produce distortions spanning several orders of magnitude.

The scope of our results is set by several assumptions. Phases are drawn independently for each link, are static, and are independent of the link strengths. Real devices have correlated phases, fixed by geometry, detuning or a shared local oscillator, and correlations of this kind would introduce
further dependence between the fluxes of neighbouring squares. Theorem~\ref{thm:resolution} already
accounts exactly for the dependence created by shared links, and its derivation extends directly to
correlated phases, but the covariances $\kappa_1$ and $\kappa_2$ would then have to be recomputed, and pairs of squares sharing no link would also contribute. The mean phase factor of
Theorem~\ref{thm:meanphasefactor} is unaffected, provided the four links within a single square remain independent. The treatment is moreover static, the propagator describing steady-state amplitude transfer rather than dynamics, and it describes single-particle or wave-like amplitudes
rather than entangled multipartite states.

There are several directions for future exploration. Since the observable content of the phases is the flux, not the individual phases, a more natural ensemble places the disorder directly on the squares, which would connect the framework to the random-flux models of localisation theory~\cite{altland1999,furusaki1999} and allow spatial correlations in the flux to be studied. Squares are the smallest invariant but not the only one, since cycles of length six and eight carry independent flux information, so a hierarchy of coefficients indexed by cycle length would give a fuller account of frustration, with the cycle-space argument of Remark~\ref{rem:balance} fixing which of the higher-order coefficients carry information not already contained in the squares. Finally, the phase factor is measurable on existing hardware: circuit quantum electrodynamics devices are characterised by both edge strengths and phases, so the coefficient can be computed from the measured strengths and phases of a device, and the prediction that interference in such a device is controlled by the fluxes through its squares, not by the phases of individual couplings, admits experimental test.

\section*{Acknowledgements}
RL acknowledges support from the EPSRC grants EP/V013068/1, EP/V03474X/1, and EP/Y028872/1.
YT acknowledges support from ELBE postdoctoral fellowship. 

\bibliographystyle{unsrt}
\bibliography{references}
 
\appendix
\section{Proofs}\label{app:proofs}

\begin{proof}[Proof of Proposition~\ref{prop:gauge}]
Under $\varphi_{uv}\mapsto\varphi_{uv}+\theta_u+\eta_v$ the flux~\eqref{eq:Ws}
becomes
\[
  (\varphi_{uv_1}+\theta_u+\eta_{v_1})-(\varphi_{u'v_1}+\theta_{u'}+\eta_{v_1}) +(\varphi_{u'v_2}+\theta_{u'}+\eta_{v_2})-(\varphi_{uv_2}+\theta_u+\eta_{v_2})
 =\Phi_s,
\]
since each of $\theta_u$, $\theta_{u'}$, $\eta_{v_1}$ and $\eta_{v_2}$ occurs once with each sign. Moduli are unaffected because $|e^{i\theta}|=1$.
\end{proof}
 
\begin{proof}[Proof of Proposition~\ref{prop:props}]
(i) Each square contributes $|W_s|\cos\Phi_s$ to the numerator of $Q_u$ and $|W_s|$ to its denominator, whence $|Q_u|\leq1$. Dropping the closing link embeds $S(u)$ into $T(u)$, and $|W_s|=|W_t|\,|w_{v_2u}|\leq|W_t|$ when $\max|w|\leq1$, so that summation gives $C^{(0)}_u\in[0,1]$ and hence $C_u=C^{(0)}_uQ_u\in[-1,1]$.
(ii) Immediate from Proposition~\ref{prop:gauge}, all three quantities being built from the $W_s$ and the $|W_t|$.
(iii) If all phases vanish then every $\Phi_s=0$ and $Q_u=1$. If in addition $|w|\equiv1$ then $|W_s|=|W_t|=1$ and $C_u=m_u/|T(u)|$.
(iv) In a balanced network some choice of phase references renders all weights positive real. By (ii) the coefficients are unchanged by that choice, and by (iii) they satisfy $C_u=C^{(0)}_u$ in it.
(v) With uniform strengths $Q_u$ is the average of $\cos\Phi_s$ over the squares at $u$, and its extreme values require every term to attain the same extreme.
(vi) The reversed tuple $(v_2,u',v_1)$ also lies in $S(u)$ and has amplitude $B_{uv_2}\overline{B_{u'v_2}}B_{u'v_1}\overline{B_{uv_1}}=\overline{W_s}$, so that squares occur in conjugate pairs.
\end{proof}

\begin{proof}[Proof of Proposition~\ref{prop:range}]
Fix a node $u'$ sharing the neighbours $v_1,\dots,v_q$ with $u$, where $q=q_{uu'}\geq2$, and set $\psi_j=\varphi_{uv_j}-\varphi_{u'v_j}$. By \eqref{eq:Ws} the square through $v_j$ and $v_l$ has flux $\psi_j-\psi_l$. With unit strengths the ordered squares running through $u'$ number $q(q-1)$, and they contribute
\[
  \sum_{j\neq l}\cos(\psi_j-\psi_l)
  =\Bigl|\sum_{j=1}^{q}e^{i\psi_j}\Bigr|^{2}-q\;\geq\;-q ,
\]
with equality exactly when the $q$ unit phasors sum to zero, which is possible for every $q\geq2$, for instance by spacing the $\psi_j$ equally around the circle. Summing over the nodes $u'$ and dividing by $m_u=\sum_{u'}q_{uu'}(q_{uu'}-1)$ gives~\eqref{eq:range}. The phases $\varphi_{u'v_j}$ belong to distinct links for distinct $u'$ and may be chosen independently, so the minimum is realised simultaneously for every $u'$, and the bound is attained. Nodes with $q_{uu'}\leq1$ contribute no squares and are absent from both sums. The single-partner case follows on retaining one term.
\end{proof}

\begin{proof}[Proof of Proposition~\ref{prop:matrix}]
Write $W_s=B_{uv_1}\overline{B_{u'v_1}}B_{u'v_2}\overline{B_{uv_2}}$ and sum over all $v_1$ and $v_2$, including the diagonal terms, to obtain 
\[
  \sum_{v_1,v_2}W_s
  =\Bigl(\sum_{v_1}B_{uv_1}\overline{B_{u'v_1}}\Bigr) \Bigl(\sum_{v_2}B_{u'v_2}\overline{B_{uv_2}}\Bigr)
  =M_{uu'}M_{u'u}=|M_{uu'}|^{2},
\]
where the last step uses that $M$ is Hermitian. The diagonal terms with $v_1=v_2$ contribute $\sum_v|B_{uv}|^{2}|B_{u'v}|^{2}$, which is $D_{uu'}$. Subtracting them and summing over $u'\neq u$ gives~\eqref{eq:num}, which is automatically real. Identity~\eqref{eq:inc} is the same computation with $B$ replaced by $R$. For~\eqref{eq:den} we have
$\sum_{v_1,v_2}R_{uv_1}R_{u'v_1}R_{u'v_2}=N_{uu'}s_{u'}$, from which the diagonal contribution $\sum_vR_{uv}R_{u'v}^{2}=F_{uu'}$ is subtracted.
\end{proof}
 
\begin{proof}[Proof of Theorem~\ref{thm:meanphasefactor}]
With unit strengths $|W_s|=1$ for every square, so that
\[
  Q_u=\frac{1}{m_u}\sum_{s\in S(u)}\cos\Phi_s ,
\]
where $m_u$ is determined by the topology. Fix a square with tuple $(v_1,u',v_2)$. Since $u\neq u'$ and $v_1\neq v_2$, its four links are pairwise distinct, so their phases are independent, whence
\[
  \E\bigl[e^{i\Phi_s}\bigr] =\chi(1)\overline{\chi(1)}\,\chi(1)\overline{\chi(1)}=|\chi(1)|^{4},
\]
which is real, so that $\E[\cos\Phi_s]=|\chi(1)|^{4}$. Averaging over the squares at $u$ uses only linearity of expectation, without any assumption of independence between squares. For $\varphi\sim\mathcal U[-\Delta,\Delta]$,
$\chi(1)=(2\Delta)^{-1}\int_{-\Delta}^{\Delta}e^{i\varphi}\,d\varphi
=\sin\Delta/\Delta$.
\end{proof}
 
\begin{proof}[Proof of Theorem~\ref{thm:factor}]
With unit strengths $C^{(0)}_u=m_u/|T(u)|$ depends only on which links exist and hence not on the phases, so that $\E[C_u]=C^{(0)}_u\E[Q_u]=C^{(0)}_u|\chi(1)|^{4}$ by Theorem~\ref{thm:meanphasefactor}. Averaging over nodes and then over topologies, which are generated independently of the phases, gives the global statement.
\end{proof}

The proof of Theorem~\ref{thm:resolution} rests on the following lemma, which classifies the ways in which two squares at a node can overlap.

\begin{lemma}[Pairwise covariance of plaquette cosines]\label{lem:cov}
Let $s=(u',\{v_1,v_2\})$ and $s'=(u'',\{w_1,w_2\})$ be distinct squares at $u$, with unit strengths
and independent link phases of characteristic function $\chi$. Exactly one of the following holds.
\begin{enumerate}[label=(\alph*)]
\item $s$ and $s'$ share a phase difference: either $u'=u''$ with one intermediary in common, or
$u'\neq u''$ with $\{v_1,v_2\}=\{w_1,w_2\}$. Then $\mathrm{Cov}(\cos\Phi_s,\cos\Phi_{s'})=\kappa_1$.
\item $u'\neq u''$ with exactly one intermediary $v$ in common, so that the squares share the single link $(u,v)$. Then $\mathrm{Cov}(\cos\Phi_s,\cos\Phi_{s'})=\kappa_2$.
\item $s$ and $s'$ share no link, and $\Phi_s$ and $\Phi_{s'}$ are independent.
\end{enumerate}

Here $\kappa_1,\kappa_2$ are given by~\eqref{eq:kappas}. If the phases are uniform on the circle, $\Phi_s$ and $\Phi_{s'}$ are independent in every case.
\end{lemma}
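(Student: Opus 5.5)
The plan is to write each flux as a sum of independent pieces, read off from \eqref{eq:Ws} which link phases two squares have in common, and then evaluate each covariance with the product formula $\cos a\cos b=\tfrac12[\cos(a+b)+\cos(a-b)]$ and the characteristic function $\chi$.

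\emph{Classification.} The square $s=(u',\{v_1,v_2\})$ uses the links $(u,v_1),(u,v_2),(u',v_1),(u',v_2)$.

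Suppose first that $u'=u''$. Two links are shared exactly when the squares have an intermediary in common. Distinct squares have at most one common intermediary $v$, so they then share the two links $(u,v)$ and $(u',v)$, and hence the phase difference $\psi_v=\varphi_{uv}-\varphi_{u'v}$.

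Suppose next that $u'\neq u''$. The only links that can be shared are those of the form $(u,v)$ with $v$ a common intermediary. Two common intermediaries give the shared difference $\varphi_{uv_1}-\varphi_{uv_2}$. One common intermediary gives a single shared link. No common intermediary gives no shared link, and then the two fluxes are functions of disjoint sets of independent phases, so they are independent.

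These possibilities are mutually exclusive and exhaustive, which gives the trichotomy (a)--(c).

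\emph{Case (a).} In the first subcase, $\Phi_s=\psi_1-\psi_2$ and $\Phi_{s'}=\pm(\psi_1-\psi_3)$, where the $\psi_j$ are independent. Since cosine is even, we may take the $+$ sign. Each $\psi$ has $\E[e^{it\psi}]=|\chi(t)|^2$. This gives
\[
\E[\cos\Phi_s\cos\Phi_{s'}]=\tfrac12\Rey\bigl(\E e^{i(2\psi_1-\psi_2-\psi_3)}+\E e^{i(\psi_3-\psi_2)}\bigr)=\tfrac12\bigl(|\chi(2)|^2r^2+r^2\bigr).
\]
Subtracting $\E[\cos\Phi_s]\E[\cos\Phi_{s'}]=r^4$ (Theorem~\ref{thm:meanphasefactor}) yields $\kappa_1$. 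In the second subcase, $\Phi_s=\delta+X$ and $\Phi_{s'}=\delta+Y$, where $\delta=\varphi_{uv_1}-\varphi_{uv_2}$ again has characteristic function $|\chi(t)|^2$, and $X,Y$ are independent differences $\varphi_{u'v_2}-\varphi_{u'v_1}$ and $\varphi_{u''v_2}-\varphi_{u''v_1}$ (up to a common orientation) with the same law. The identical computation gives $\kappa_1$ again.

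\emph{Case (b).} Let $\varphi=\varphi_{uv}$ be the shared phase and orient both squares so that $\varphi$ enters with sign $+$. Then $\Phi_s=\varphi+X$ and $\Phi_{s'}=\varphi+Y$, where $X$ and $Y$ are independent of each other and of $\varphi$, and each is an alternating sum of three phases with $\E e^{iX}=\E e^{iY}=r\,\overline{\chi(1)}$. Expanding as before gives
\[
\tfrac12\Rey\bigl[\chi(2)\,r^2\,\overline{\chi(1)}^{\,2}\bigr]+\tfrac12\,\E e^{i(X-Y)}.
\]
Here $\E e^{i(X-Y)}=|r\chi(1)|^2=r^3$. Subtracting $r^4$ gives $\kappa_2$.

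\emph{Uniform phases.} For the last assertion, each of $\Phi_s$ and $\Phi_{s'}$ contains a link phase the other lacks. Condition on all phases except that private phase of $s$. Convolving with the uniform law on the circle makes $\Phi_s$ uniform and independent of $\Phi_{s'}$.

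\emph{Main obstacle.} The only delicate point is bookkeeping of orientations and conjugations. I must check that the shared variable can always be given the same sign in both fluxes, which is allowed because cosine is even, and I must track which factors carry $\chi$ versus $\overline{\chi}$. I would verify this carefully in case (b), where $\chi$ need not be real and the $\Rey$ must be retained.
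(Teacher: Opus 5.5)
Your proposal is correct and follows essentially the same route as the paper. It uses the same classification by shared links, the same orientation trick based on the evenness of cosine, and the same product-to-sum evaluation through $\chi$, which yields $\kappa_1$ and $\kappa_2$. The differences are cosmetic: you treat the two subcases of (a) separately where the paper writes both as $\Phi_s=a-b$, $\Phi_{s'}=a-c$, and for uniform phases you use a private link phase of $s$ in every case, whereas the paper uses the uniform difference $b$ in case (a) and the private phase $\varphi_{u'v_2}$ in case (b).
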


\begin{proof}
\emph{Classification.} The links of $s$ are $(u,v_1),(u,v_2),(u',v_1),(u',v_2)$, and similarly for
$s'$. If $u'=u''$ the squares cannot share both intermediaries without coinciding, so they share
exactly one intermediary, $v_1$ say, or none. In the first case the common links are $(u,v_1)$ and
$(u',v_1)$, which enter both fluxes through the difference $\psi_1=\varphi_{uv_1}-\varphi_{u'v_1}$. If
$u'\neq u''$ the links at $u'$ and $u''$ are distinct, and the common links are the $(u,v)$ with $v$
in both pairs of intermediaries. Two common intermediaries give the shared difference
$\varphi_{uv_1}-\varphi_{uv_2}$, one gives a single shared link, and none gives disjoint link sets,
whose phases are independent. This establishes the three cases and (c).
 
\emph{Case (a).} Since $\cos$ is even, orient both fluxes so that the shared difference enters with
a positive sign. Then $\Phi_s=a-b$ and $\Phi_{s'}=a-c$, where $a$, $b$ and $c$ are independent and
each is a difference of two independent link phases, so that $\E[e^{ia}]=|\chi(1)|^2=r$ and
$\E[e^{2ia}]=|\chi(2)|^2$, both real. Using
$\cos\Phi_s\cos\Phi_{s'}=\tfrac12\cos(2a-b-c)+\tfrac12\cos(b-c)$,
\[
\E[\cos\Phi_s\cos\Phi_{s'}]=\tfrac12|\chi(2)|^2r^2+\tfrac12r^2,
\]
and subtracting $\E[\cos\Phi_s]\E[\cos\Phi_{s'}]=r^4$ (Theorem~\ref{thm:meanphasefactor}) gives $\kappa_1$.
 
\emph{Case (b).} Orient both fluxes so that $x=\varphi_{uv}$ enters positively. Then $\Phi_s=x+y$ and
$\Phi_{s'}=x+y'$, where $y$ and $y'$ are independent of each other and of $x$, and each is a sum of
three link phases with signs $(-,+,-)$, so that $\E[e^{iy}]=\overline{{\chi(1)}}\,r$. Hence
$\mathbb{E}[e^{i(\Phi_s+\Phi_{s'})}]=\chi(2)\,\overline{{\chi(1)}}^{\,2}r^2$ and
$\E[e^{i(\Phi_s-\Phi_{s'})}]=|\E[e^{iy}]|^2=r^3$. Using
$\cos\Phi_s\cos\Phi_{s'}=\tfrac12\operatorname{Re}\bigl(e^{i(\Phi_s+\Phi_{s'})}+e^{i(\Phi_s-\Phi_{s'})}\bigr)$
and subtracting $r^4$ gives $\kappa_2$.
 
\emph{Uniform phases.} In case (a) the difference $b$ of two uniform phases is uniform and
independent of $(a,c)$. Convolution with the uniform distribution on the circle returns the uniform
distribution, so $a-b$ is uniform and independent of $(a,c)$, hence of $\Phi_{s'}=a-c$. In case (b)
the phase $\varphi_{u'v_2}$ appears in $\Phi_s$ alone, and the same argument shows that $\Phi_s$ is
uniform and independent of $(x,y')$, hence of $\Phi_{s'}$.
\end{proof}

The independence at uniform phases is pairwise only. Three squares on a common pair satisfy
$\Phi_{13}=\Phi_{12}+\Phi_{23}$ identically, so any two of them determine the third.
 
\begin{proof}[Proof of Theorem~\ref{thm:resolution}]
By Proposition~\ref{prop:props}\,(vi) the $m_u$ ordered squares form $\tilde m_u$ conjugate pairs,
and $\cos\Phi$ takes the same value on both members of a pair, so that
$Q_u=\tilde m_u^{-1}\sum_s\cos\Phi_s$, the sum running over distinct squares. Then
\[
\Var(Q_u)=\frac1{\tilde m_u^{\,2}}\Bigl(\sum_s\Var(\cos\Phi_s)+\sum_{s\neq s'}\mathrm{Cov}(\cos\Phi_s,\cos\Phi_{s'})\Bigr),
\]
the second sum running over ordered pairs, and Lemma~\ref{lem:cov} gives~\eqref{eq:varQ}. For the
variance, $\E[\cos^2\Phi]=\tfrac12(1+\E[\cos2\Phi])$, and the argument of Theorem~\ref{thm:meanphasefactor} at
frequency two gives $\E[\cos2\Phi]=|\chi(2)|^4$, so that $\sigma^2=\tfrac12(1+|\chi(2)|^4)-r^4$.
 
\emph{Vanishing of the covariances.} Both $\kappa_1$ and $\kappa_2$ carry a factor $r^2$ and vanish
when $\chi(1)=0$. Conversely $\kappa_1=r^2\Var(\cos a)$, where $a=\varphi-\varphi'$ is the difference of two
independent link phases. Zero lies in the support of $a$, and if $\varphi$ is not a point mass the
support also contains some $x-y\neq0$ with $x$ and $y$ in the support of $\varphi$; since
$x,y\in[-\pi,\pi)$ we have $0<|x-y|<2\pi$ and hence $\cos(x-y)<1$. Every neighbourhood of either
point carries positive probability, so $\cos a$ is not almost surely constant and $\kappa_1>0$
whenever $r>0$. For $\varphi\sim \mathcal{U}[-\Delta,\Delta]$, $\chi(1)=\sin\Delta/\Delta$ vanishes on $(0,\pi]$ only at
$\Delta=\pi$, where also $\chi(2)=0$, so that $\sigma^2=\tfrac12$ and $\Var(Q_u)=1/m_u$.
 
\emph{Counts.} Write $q_{u'}=q_{uu'}$, let $c_{v_1v_2}$ be the number of nodes other than $u$
adjacent to both $v_1$ and $v_2$, and let $n_{uv}=\sum_{u'\in N(v)\setminus\{u\}}(q_{u'}-1)$ be the
number of distinct squares at $u$ through the link $(u,v)$. Pairs of type (a) on a common partner
number $P_A=\sum_{u'}q_{u'}(q_{u'}-1)(q_{u'}-2)/2$, and those on a common pair of intermediaries
number $P_B=\sum_{\{v_1,v_2\}\subseteq N(u)}\binom{c_{v_1v_2}}{2}$, so that $P^{(1)}_u=P_A+P_B$. Every
pair of squares through a common link $(u,v)$ is counted in $\sum_v\binom{n_{uv}}{2}$, pairs of the
first kind of type (a) once, pairs of the second kind twice (they share two such links), and pairs
of type (b) once, whence $P^{(2)}_u=\sum_{v\in N(u)}\binom{n_{uv}}{2}-P_A-2P_B$.
\end{proof}
 
\begin{proof}[Proof of Corollary~\ref{cor:lattice}]
On the lattice, with $n_U=n_V\ge2k$ so that no overlaps wrap around the ring, the partner
$u'=u+d$ shares $q=k-|d|$ neighbours for $1\le|d|\le k-1$. Hence
$\tilde m=\sum_{q=1}^{k-1}q(q-1)=k(k-1)(k-2)/3$ and
$P_A=\sum_{q=1}^{k-1}q(q-1)(q-2)=k(k-1)(k-2)(k-3)/4$. A pair of neighbours of $u$ at separation
$g$ is adjacent to $c=k-1-g$ further nodes, and there are $k-g$ such pairs, giving
$P_B=\sum_{h=1}^{k-1}h\binom{h-1}{2}=P_A/2$. The neighbour at offset $j\in\{0,\dots,k-1\}$ carries
$n_j=(k-1)^2-\tfrac12[j(j+1)+(k-1-j)(k-j)]$ squares, and elementary summation of
$\sum_j\binom{n_j}{2}-2P_A$ gives $P^{(2)}$. Substituting into~\eqref{eq:varQ}
gives~\eqref{eq:varlattice}, whose expansion in $1/k$ is
$81\kappa_2/(20k)+\bigl(\tfrac{27}{4}\kappa_1-\tfrac{87}{10}\kappa_2\bigr)k^{-2}+O(k^{-3})$. All four counts
were verified against direct enumeration for $3\le k\le15$.
\end{proof}

\begin{proof}[Proof of Theorem~\ref{thm:I2}]
With unit strengths $N_{uu'}=q$ and $M_{uu'}=\sum_{j=1}^{q}e^{i\psi_j}$, where
$\psi_j=\varphi_{uv_j}-\varphi_{u'v_j}$. The $2q$ link phases involved are distinct, so that the $\psi_j$ are independent and identically distributed with $\E[e^{i\psi}]=|\chi(1)|^{2}=r$, which is real. Then $\E[|M_{uu'}|^{2}]=\sum_{j,l}\E[e^{i(\psi_j-\psi_l)}]=q+q(q-1)r^{2}$, and division by $N_{uu'}^{2}=q^{2}$ gives $\E[(I^{(2)})^{2}]=q^{-1}+(1-q^{-1})r^{2}$, which is~\eqref{eq:I2law}. For $\varphi\sim\mathcal U[-\Delta,\Delta]$ we have $r=\sinc^{2}\Delta$.
\end{proof}

 \section{Relation to existing measures}\label{app:relations}

Setting all phases to zero in~\eqref{eq:C} gives
\[
  C_u=\frac{\sum_{u'\neq u}\sum_{v_1\neq v_2}
      |w_{uv_1}w_{v_1u'}w_{u'v_2}w_{v_2u}|}
     {\sum_{u'\neq u}\sum_{v_1\neq v_2}|w_{uv_1}w_{v_1u'}w_{u'v_2}|},
\]
the bipartite analogue of the weighted coefficient of Zhang and
Horvath~\cite{zhanghorvath2005}, an average of the closing link weighted by the strengths of the paths reaching it. For binary weights this becomes $m_u/|T(u)|$, the local form of the four-cycle to three-path ratio of Robins and Alexander~\cite{robins2004}. The coefficients of Lind \emph{et al.}~\cite{lind2005} and Zhang \emph{et al.}~\cite{zhang2008} differ only in the convention used to count potential squares and are monotonically related to it on the ensembles considered here.
 
The relation to the interferometric coefficient~\eqref{eq:krawciw} amounts to a change of motif. There the two interfering objects are a direct link and a two step path through an intermediary. A bipartite network possesses no direct link between nodes of the same class, and the two objects are instead two two-step paths through distinct intermediaries. Writing $P_1=w_{uv_1}w_{v_1u'}$ and $P_2=w_{uv_2}w_{v_2u'}$ we have $W_s=P_1\overline{P_2}$ and $\Rey W_s=|P_1||P_2|\cos(\arg P_1-\arg P_2)$, so that the numerator of~\eqref{eq:C} is a sum over interfering pairs of routes weighted by their amplitudes, which is the two-slit reading of Figure~\ref{fig:schematic}(c).
 
The closed and open measures constitute two views of one quantity. Summing $W_s$ over $v_1$ and $v_2$ at fixed $u$ and $u'$ gives $|M_{uu'}|^{2}$ by Proposition~\ref{prop:matrix}, and $|M_{uu'}|=q_{uu'}I^{(2)}_{uu'}$ for unit strengths, whence
\[
  \sum_{s\in S(u)}\Rey W_s
  =\sum_{u'\neq u}\Bigl(q_{uu'}^{2}\bigl(I^{(2)}_{uu'}\bigr)^{2}-q_{uu'}\Bigr).
\]
The clustering coefficient is therefore a degree-normalised aggregate of squared open-path visibilities.

\section{Computation, validation and robustness}\label{app:computation}

The accompanying code can be found at \url{https://github.com/anjaliwgh/complex_bipartite_networks.git}.

\subsection{Generation and cost}
Construction of the seed lattice costs $O(nk)$. Rewiring is carried out one node at a time, all links selected for rewiring at a node being removed together and replaced by a sample without replacement from that node's non-neighbours, which preserves the degrees in $\Uset$ exactly and avoids the bias introduced by link-by-link rewiring through collisions. Evaluation of the clustering coefficient through Proposition~\ref{prop:matrix} requires the sparse products $BB^{\mathsf H}$, $RR^{\mathsf T}$, $R^{\circ2}(R^{\circ2})^{\mathsf T}$ and $R(R^{\circ2})^{\mathsf T}$, each costing $O(nk^{2})$ operations and producing $O(nk^{2})$ nonzeros, followed by elementwise operations and row sums of the same order, giving $O(nk^{2})$ in time and memory. Explicit enumeration costs $O(n^{2}k)$. The counts entering
Theorem~\ref{thm:resolution} have the same cost. $P_A$ is a row sum over the nonzeros of $N$; $P_B$ requires, for each node, a lookup of $c_{v_1v_2}$ in the co-occurrence matrix $B^TB$ for each
of its $O(k^2)$ pairs of neighbours; and $n_{uv}=(NB)_{uv}-k_u-k_v+1$ on binary $B$ requires, for each link, a sum over the $k_v$ neighbours of $v$, $O(\sum_vk_v^2)=O(nk^2)$ in total. Hop-count distances require a breadth-first search from every node, at $O(n^{2}k)$. The apparent distance requires the propagator $(\mathbb I-\alpha A)^{-1}$, for which a dense inverse at $O(n^{3})$ is used, and this limits the system sizes reported here. For larger networks one may instead solve $(\mathbb I-\alpha A)x=e_y$ for a sample of sources at $O(nk)$ per solve, convergence being guaranteed by $\alpha<1/\rho(|A|)$.
 \subsection{Validation}

Four validation checks were performed before the production experiments. First, the matrix identities~\eqref{eq:num}--\eqref{eq:den} were compared with direct enumeration on random bipartite networks with heterogeneous strengths and random phases, agreeing to a precision of $3\times10^{-14}$. Second, setting all phases to zero reproduced the directly enumerated classical bipartite clustering coefficient to $10^{-10}$, with $Q_u=1$ at every node, as required by Proposition~\ref{prop:props}\,(iii). Third, random redefinitions of the phase references left $C_u$, $C_u^{(0)}$, $Q_u$ and $I^{(2)}_{uu'}$ unchanged to $10^{-16}$, confirming gauge invariance (Proposition~\ref{prop:props}\,(ii)). Finally, the measured mean phase factor agreed with the analytical prediction $(\sin\Delta/\Delta)^4$ within statistical error for every network size, degree and rewiring probability tested.
\subsection{Resolution}
Theorem~\ref{thm:resolution} was tested across the whole range of phase disorder. 

\emph{Covariances and counts.} The constants $\kappa_1$ and $\kappa_2$ of Lemma~\ref{lem:cov} were checked by direct sampling of the link phases, $4\times10^6$ samples per value, for each of the three kinds of overlapping pair at $\Delta\in \{\pi/4,\pi/2, 2, \pi\}$, and agree within sampling error. 
The counts $\tilde m_u$, $P^{(1)}_u$ and $P^{(2)}_u$, evaluated through the expressions given in the proof of Theorem~\ref{thm:resolution}, agree exactly with direct enumeration of all pairs of squares, on the lattice for $3\le k\le15$ and on rewired networks at $p\in\{0.1,0.5,1\}$ for nodes of both classes.

\emph{Regular lattice.} Figure~\ref{fig:resolution} compares Corollary~\ref{cor:lattice} with simulation on the lattice with $n_U=n_V=150$ and degrees $k\in\{3,4,6,10,14\}$, the degree controlling the number of squares available at each node. The measured standard deviation of $Q_u$ agrees with the prediction over the whole range of $\Delta$ and almost three orders of magnitude in $m_u$, whereas the estimate $2\sigma^{2}/m_u$ obtained by treating distinct squares as independent is low by up to a factor of $3.4$.

\emph{Rewired networks.} Since~\eqref{eq:varQ} holds for each node of any fixed topology, it was also tested in that form, on fixed networks with $n_U=n_V=100$, $k=6$ and $p\in\{0.1,1\}$, at every node of both classes lying on a square and at four values of $\Delta$. Over $4000$ phase realisations the measured variance of $Q_u$ matches the prediction to within the sampling error of a variance estimated from that many draws, with no trend in $p$. The independence estimate is low by a factor of $2.4$ at $p=0.1$ and $\Delta=\pi/4$, and by a factor of $1.1$ at $p=1$, where squares in a random bipartite network rarely share links.

\begin{figure}[t]
\centering
\includegraphics[width=\textwidth]{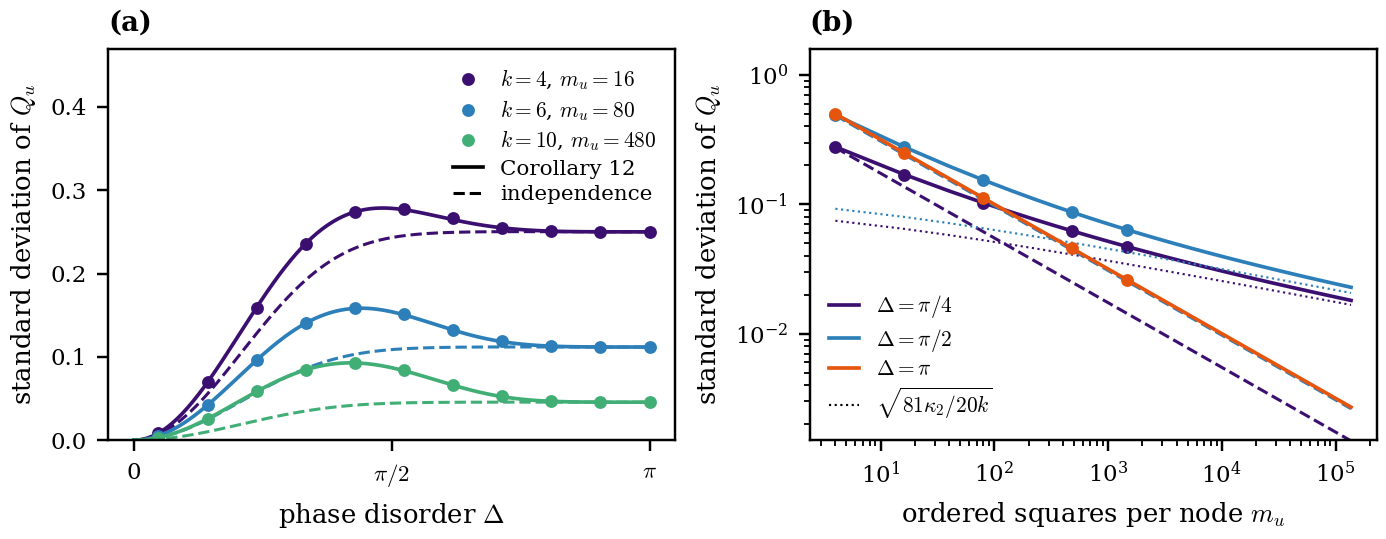}
\caption{The resolution of the phase factor is set by the number of squares only at complete phase disorder.
Regular ring lattice with $n_U=n_V=150$; $500$ realisations. Symbols denote simulations, solid lines the prediction of Corollary~\ref{cor:lattice}, and dashed lines the independent squares approximation $2\sigma^2/m_u$. \textbf{(a)} Standard deviation of $Q_u$ versus phase disorder for three degrees. The independence estimate is exact only at $\Delta=\pi$ and otherwise increasingly understates the fluctuations. \textbf{(b)} Standard deviation versus the number of ordered squares per node, for degrees $3$--$60$. At $\Delta=\pi$, $\mathrm{rms}(Q_u)=m_u^{-1/2}$ exactly; for $\Delta<\pi$, the data follow the asymptotic scaling $\sqrt{81\kappa_2/(20k)}$, corresponding to a decay of $m_u^{-1/6}$.}
\label{fig:resolution}
\end{figure}

\subsection{Attainable range}

Proposition~\ref{prop:range} was verified by direct minimisation of $Q_u$ over the link phases using repeated Nelder--Mead descent from random initial configurations. In all cases tested, the numerical minima matched the analytical bound within numerical tolerance. For complete bipartite networks, the minima are $-1$ for $K_{2,2}$, $-1/2$ for $K_{3,3}$, and $-1/3$ for $K_{4,4}$, in agreement with $-1/(q-1)$, where $q$ is the size of the opposite partition. The bipartite ring lattice attains $-1$ at degree $3$, where no pair of nodes shares more than two neighbours, and then $-0.625$ at degree $4$ and $-0.35$ at degree $6$ as denser overlaps appear. Random bipartite networks also agree with the predicted value to four decimal places. For the ensemble used in Section~\ref{sec:numerics}, at degree $6$, the coefficient therefore cannot fall below approximately $-0.35$ under any choice of phases, well outside the range explored by the uniform phase disorder considered there.
\subsection{Heterogeneous strengths}
Theorem~\ref{thm:meanphasefactor} is exact for unit strengths. Where strengths vary, the weighting correlates the numerator and denominator of $Q_u$ and introduces a correction of second order in the dispersion. Drawing strengths uniformly from $(0.5,1)$, the largest deviation of the measured phase factor from $\sinc^{4}\Delta$ over the whole range of $\Delta$ is $0.011$, so that the closed form remains an accurate approximation for realistic heterogeneity.
 
\subsection{Statistics and reproducibility}
Each parameter pair is averaged over forty realisations, and standard errors remain below two per cent of the mean for all clustering, distance and interference quantities. The exception is $\bar C$ at complete phase disorder, whose expectation vanishes so that the relative error is unbounded, and this value is reported but not interpreted. 
 
\end{document}